\documentclass{article}

\usepackage{arxiv}
\usepackage{setspace}
\usepackage{amsmath,amssymb,amsthm,mathtools,dsfont}
\usepackage{bm}
\usepackage{nicefrac}

\usepackage{graphicx}
\usepackage{booktabs}
\usepackage{array}
\usepackage{caption}

\usepackage{natbib}
\usepackage{hyperref}

\numberwithin{equation}{section}
\theoremstyle{plain}
\newtheorem{thm}{Theorem}[section]
\newtheorem{cor}[thm]{Corollary}
\newtheorem{lem}[thm]{Lemma}
\newtheorem{prop}[thm]{Proposition}
\theoremstyle{definition}
\newtheorem{defn}{Definition}[section]
\theoremstyle{remark}

\newcommand{\bs}{{\bf s}}

\newcommand{\bSigma}{{\boldsymbol \Sigma}}

\newcommand{\bPsi}{{\boldsymbol \Psi}}

\newcommand{\cE}{\mathcal{E}}

\newcommand{\cN}{\mathcal{N}}

\newcommand{\cU}{\mathcal{U}}

\def\cov{\hbox{Cov}}

\def\trans{^{\rm T}}

\usepackage{scalerel,stackengine}
\stackMath
\newcommand\reallywidehat[1]{%
\savestack{\tmpbox}{\stretchto{%
  \scaleto{%
    \scalerel*[\widthof{\ensuremath{#1}}]{\kern-.6pt\bigwedge\kern-.6pt}%
    {\rule[-\textheight/2]{1ex}{\textheight}}%WIDTH-LIMITED BIG WEDGE
  }{\textheight}% 
}{0.5ex}}%
\stackon[1pt]{#1}{\tmpbox}%
}

\title{Scalable Changepoint Detection for Large Spatiotemporal Data on the Sphere}

\author{
 Samantha Shi-Jun \\
  Department of Statistics\\
  University of Illinois at Urbana-Champaign\\
  Urbana, IL 61820 \\
  \texttt{wooimj2@illinois.edu} \\
   \And
 Bo Li \\
  Department of Statistics\\
  University of Washington in St. Louis\\
  St. Louis, MO 63130 \\
  \texttt{bol@wustl.edu} \\
}

\begin{document}
\maketitle
\begin{abstract}
We propose a novel Bayesian framework for changepoint detection in large-scale spherical spatiotemporal data, with broad applicability in environmental and climate sciences. Our approach models changepoints as spatially dependent categorical variables using a multinomial probit model (MPM) with a latent Gaussian process, effectively capturing complex spatial correlation structures on the sphere. To handle the high dimensionality inherent in global datasets, we leverage stochastic partial differential equations (SPDE) and spherical harmonic transformations for efficient representation and scalable inference, drastically reducing computational burden while maintaining high accuracy. Through extensive simulation studies, we demonstrate the efficiency and robustness of the proposed method for changepoint estimation, as well as the significant computational gains achieved through the combined use of the MPM and truncated spectral representations of latent processes. Finally, we apply our method to global aerosol optical depth data, successfully identifying changepoints associated with a major atmospheric event.
\end{abstract}

\keywords{Bayesian hierarchical model \and Multinomial probit model \and  Spatially-varying Changepoint \and Spherical harmonics \and Stochastic partial differential equation}

\section{Introduction}
Spatiotemporal datasets often exhibit abrupt shifts driven by external influences such as environmental changes, policy interventions, or natural disasters. These shifts may be manifested in various characteristics of the data, including the mean, variance, or spatial or temporal dependence structure. Identifying the time at which these changepoints occur is crucial to understanding how complex systems evolve and respond to perturbations, with applications spanning climate science, economics, and public health  \citep{shi2025tracing, berchuck2019spatially, berrett2023bayesian}.

For univariate time series, changepoint detection has been studied extensively, with comprehensive reviews provided in \cite{reeves2007review} and \cite{aminikhanghahi2017survey}. Notable methods include the pruned exact linear time (PELT) algorithm \citep{killick2012optimal} and the product partition model (PPM) \citep{barry1992product}. 
For spatiotemporal data, \cite{majumdar2005spatio}  developed a Bayesian framework for detecting various types of changepoints over time. Another related work is by \cite{xuan2007modeling}, who extended the PPM to account for dependence structures across multivariate time series using sparse Gaussian graphical models.
Both approaches, however, assume a common changepoint across all spatial locations or multivariate series, thereby enforcing that changes occur simultaneously.
Such an assumption becomes increasingly unrealistic for large-scale spatiotemporal data, especially when the change-inducing event propagates across space with a temporal lag.
 % at finer temporal resolutions, where the effect of a change-inducing event may propagate across space with a temporal lag.

Recent work addresses this limitation by modeling changepoints as a spatially correlated process, allowing them to vary across space to capture localized changes \citep{shi2025tracing, berchuck2019spatially, wang2023asynchronous}. This added flexibility, however, comes at a substantial computational cost, limiting their applicability to relatively small-scale problems. In many applications of interest, however, changepoint analysis is often required for large-scale datasets. A canonical example is global climate reanalysis products commonly used in climate science, which are observed on high-resolution grids covering the Earth’s surface with native spatial resolutions typically ranging from $0.25^\circ \times 0.25^\circ$ to $1^\circ \times 1^\circ$ \citep[e.g.,][]{hersbach2020era5, gelaro2017modern}. Addressing changepoint detection in such settings motivates the need for methods that can operate efficiently on high-resolution spherical data. 

In this paper, we focus on the problem of detecting at most one mean shift in time at each spatial location. Let $Y(\bs, t)$ denote a spatiotemporal process observed over the spatial domain $\mathbb{S}^2$ and temporal window $[0, M]$.  Following  \citet{shi2025tracing} and \citet{berchuck2019spatially}, we model $Y(\bs,t)$ as
\begin{align} \label{eq:model}
Y(\mathbf{s},t) &= \begin{cases} \mu_1(\mathbf{s},t) + U(\mathbf{s},t) + \epsilon(\mathbf{s},t), &t \leq \tau(\mathbf{s}) \\ \mu_2(\mathbf{s},t) + U(\mathbf{s},t) + \epsilon(\mathbf{s},t), &t > \tau(\mathbf{s}), \end{cases}
\end{align}
where $\tau(\mathbf{s})$ is the changepoint at location $\bf s$, $\mu_1(\mathbf{s},t)$ and $\mu_2(\mathbf{s},t)$ are mean functions before and after changepoint,  $U(\mathbf{s},t)$ is a zero mean spatiotemporal error process, and   $\epsilon(\bs,t) \overset{\mathrm{iid}}{\sim} N(0,\sigma^2_\epsilon)$ are measurement errors. 
Due to the complexity of the model structure and the number of unknown parameters, inference for models of the form \eqref{eq:model} is commonly carried out within a Bayesian hierarchical modeling (BHM) framework and fitted using Markov Chain Monte Carlo (MCMC). In this setting, the changepoint $\tau(\bs)$ is typically modeled as a spatial process to induce dependence across locations. However, non-conjugacy between the likelihood and changepoint priors necessitates Markov chain Monte Carlo (MCMC) to rely on high-dimensional Metropolis–Hastings updates to sample $\tau(\bs)$, which often suffers from poor mixing and high computational cost \citep{shand2018spatially}. Moreover, the large covariance matrices associated with both $\tau(\bs)$ and the spatiotemporal process $U(\bs,t)$ further exacerbate computational challenges as the spatial resolution increases.
% Bayesian hierarchical modeling framework (BHM) and  Markov chain Monte Carlo (MCMC) are commonly employed to fit such changepoint models. A BHM typically consists of three levels: the first layer specifies the likelihood of the observed data, followed by the second layer defining the prior models for latent processes and the third layer placing priors on hyperparameters. In our setting, Model~\eqref{eq:model} belongs to the first level, while $\tau(\bs)$ and $U(\bs,t)$ are specified at the second. The mean functions $\mu_1(\bs,t)$ and $\mu_2(\bs,t)$ can be placed in either the second or third level, depending on their structural form.  
% The changepoint $\tau(\bs)$ is typically modeled as a spatial process to induce dependence across locations. However, non-conjugacy between the likelihood and changepoint priors necessitates MCMC to rely on high-dimensional Metropolis–Hastings updates to sample $\tau(\bs)$, which often suffers from poor mixing and high computational cost \citep{shand2018spatially}. Moreover, the large covariance matrices associated with both $\tau(\bs)$ and the spatiotemporal process $U(\bs,t)$ further exacerbate computational challenges as the spatial resolution increases.

A common strategy to mitigate this computational burden is to employ reduced-rank methods, such as Fixed Rank Kriging \citep{cressie2008fixed} or multiresolution models \citep{nychka2002multiresolution}, which approximate the spatial process using a limited set of basis functions. However, these methods require determining the number of basis functions and selecting an appropriate type of basis. Choosing too few basis functions can lead to poor representation of spatial variability, while using too many may negate computational gains. Additionally, the choice of basis, whether fixed (e.g., splines, wavelets) or data-adaptive (e.g., empirical orthogonal functions), can heavily influence model performance and interpretability. Spectral methods \citep{royle2005efficient, paciorek2007computational} circumvents the need for such selections by using the Fourier basis in combination with the fast Fourier transform (FFT). This allows the use of a complete set of basis functions -- equal in number to the data points on a regular grid -- without compromising computational efficiency. However, extending this approach to spherical data introduces additional challenges. First, standard covariance functions defined in Euclidean space may not be positive definite on the sphere, which limits the class of admissible covariance functions necessitates the use of specialized kernels that respect spherical geometry \citep{guinness2016isotropic}. Furthermore, unlike the Fourier basis in the Euclidean setting, the matrix of basis functions evaluated at sampled grid points for spherical data is no longer orthogonal, which complicates both the expression and computation of spectral quantities in matrix form.

To address these challenges, we propose a scalable methodology for detecting and estimating spatially varying changepoints on a sphere.
%Our method is designed to scale efficiently with high spatial resolution, while assuming a relatively short temporal duration, making it particularly well-suited for large spatial domains with moderate time lengths. 
Our contributions are twofold. First, we introduce a novel use of the spatial multinomial probit model as a prior for the changepoints, which enables conjugacy and thus allows changepoints to be sampled efficiently via Gibbs sampler while preserving the spatial correlation structure. This modeling strategy, to our knowledge, is new in the context of spatial changepoint detection and can also be incorporated into existing frameworks such as \cite{shi2025tracing} and \cite{berrett2023bayesian} to enhance their scalability for large spatial datasets. Second, we model the spatial (and spatiotemporal) process observed on the unit sphere $\mathbb{S}^2$ as the solution to a stochastic partial differential equation (SPDE), which enables the use of eigenfunctions of the Laplace–Beltrami operator on $\mathbb{S}^2$ to alleviate the computational burden associated with large spatial covariance matrices and yields a Matérn-like covariance structure. While the SPDE approach for modeling spatial processes on manifolds, including the sphere, is well established in the literature \citep[e.g.,][]{lang2015isotropic, lindgren2011explicit, solin2020hilbert}, our contribution lies in developing a framework that integrates fast spherical harmonic transforms into fully Bayesian inference for spatiotemporal processes with measurement error, enabling efficient posterior sampling. Although developed in the context of changepoint detection, the proposed framework is broadly applicable to Bayesian modeling of large spatial and spatiotemporal datasets on the sphere.

The remainder of this paper is organized as follows. Section~\ref{sec:method} introduces the proposed method and discusses how it addresses the computational challenges outlined above. Section~\ref{sec:simulation} evaluates the effectiveness of our method in changepoint estimation and examines the trade-off between estimation accuracy and computational efficiency. In Section~\ref{sec:data}, we apply our method to global aerosol optical depth (AOD) data on a $180 \times 360$ spatial grid surrounding the time of the Mt. Pinatubo eruption. Finally, Section~\ref{sec:disc} summarizes our contributions and outlines directions for future work.
 
\section{Method}\label{sec:method}
We first introduce a prior for the changepoint process $\tau(\bs)$ that enables conjugate sampling within a Bayesian framework. We then present a SPDE-based modeling framework for spatial and spatiotemporal processes on the sphere that facilitates fast computation through spectral methods. 

\subsection{Multinomial probit prior for changepoints}\label{sec:cpmodel}
Following Model~\eqref{eq:model}, suppose data is observed over spatial locations $\bs_1, \ldots, \bs_N$ and time points $t \in \{1,2,\ldots, M\}$. Let $\mathbf{Y}_t = (Y(\bs_1,t),\ldots,Y(\bs_N,t))\trans$ and define $\mathbf{U}_t$ analogously. Define $\boldsymbol{\mu}_t(\boldsymbol{\tau}) = (\mu_i(\bs_1,t),\ldots, \mu_i(\bs_N,t))\trans$ for $i\in \{1,2\}$,  where $i=1$ if $t\leq \tau(\bs)$ and $i=2$ otherwise. 

Most parameters in \eqref{eq:model} admit conjugate priors and can be sampled via a Gibbs sampler. In contrast, specifying a prior for $\boldsymbol{\tau}=(\tau(\bs_1),\ldots,\tau(\bs_N))\trans$ is substantially more challenging, as $\boldsymbol{\tau}$ enters the likelihood in a non-linear manner through the indexing of the mean functions. To induce spatial dependence in the changepoint process, existing approaches typically model $\boldsymbol{\tau}$ as a continuous-valued spatial process, most commonly through a multivariate Gaussian distribution specified via a covariance matrix. This modeling choice, however, leads to a non-conjugate likelihood and hence a posterior distribution that is difficult to sample from directly. To see this, let $p(\boldsymbol{\tau})$ denote the prior density of $\boldsymbol{\tau}=(\tau(\bs_1),\ldots, \tau(\bs_N))\trans$. The posterior density for $\boldsymbol{\tau}$ then takes the form
\begin{align*}
    [\boldsymbol{\tau} \mid \cdot] = \frac{\prod_{t=1}^{M} \exp\left\{-\lVert\mathbf{Y}_t-\boldsymbol{\mu}_t(\boldsymbol{\tau})-\mathbf{U}_t\rVert_2^2/(2\sigma^2_\epsilon)\right\}p(\boldsymbol{\tau})}{\int \prod_{t=1}^k \exp\left\{-\lVert\mathbf{Y}_t-\boldsymbol{\mu}_t(\boldsymbol{\tau})-\mathbf{U}_t\rVert_2^2/(2\sigma^2_\epsilon)\right\}p(\boldsymbol{\tau}) d\boldsymbol{\tau}}. 
\end{align*}
Because $\boldsymbol{\mu}_t(\boldsymbol{\tau})$ is a nonlinear function of $\boldsymbol{\tau}$, no choice of prior $p(\boldsymbol{\tau})$ yields a closed form posterior amenable to Gibbs sampling. As a consequence, posterior inference for $\boldsymbol{\tau}$ generally requires high-dimensional Metropolis–Hastings updates, which scale poorly with the spatial dimension $N$.

An alternative option is to treat $\tau({\bf s})$ as a discrete variable taking values in $\{1,2,\ldots, M\}$ and specify the prior using category probabilities $\pi_k({\bf s}) := \mathbb{P}(\tau({\bf s}) = k)$ for $ k = 1,\ldots, M$. In this case, the marginal posterior distribution for each $\tau(\bs_i)$ follows a categorical distribution with probabilities given by
\begin{align}\label{eq:tau_post}
    &\mathbb{P}(\tau({\bf s}_i)=k | \cdot) \propto  \prod_{t=1}^M [Y(\bs_i, t) \mid \tau(\bs_i) = k,\cdot] \times \pi_k(\bs_i)    \\
    \propto & \prod_{t=1}^k \exp\left(\frac{-(Y(\bs_i,t)-\mu_1(\bs_i,t)-U(\bs_i,t))^2}{2\sigma^2}\right)\prod_{t=k+1}^M \exp\left(\frac{-(Y(\bs_i,t)-\mu_2(\bs_i,t)-U(\bs,t))^2}{2\sigma^2}\right) \pi_k(\bs_i),  \nonumber
    % &= \prod_{t=1}^k \exp\left(\frac{-(Y_t(\bs)-\mu^1_t(\bs)-U_t(\bs))^2}{2\sigma^2_\epsilon}\right)\prod_{t=k+1}^M \exp\left(\frac{-(Y_t(\bs)-\mu^2_t(\bs)-U_t(\bs))^2}{2\sigma^2_\epsilon}\right) \pi_k(\bs). \nonumber
\end{align} 
The normalizing constant is readily obtained since the support is discrete and finite. However, this formulation treats locations independently and therefore does not encode spatial dependence across the changepoint process, necessitating an alternative mechanism for inducing spatial correlation.

To achieve conjugacy with the data likelihood while preserving spatial correlation, we propose modeling the changepoint process using a spatial multinomial probit model, which generalizes the spatial binary regression framework proposed by \cite{paciorek2007computational}. Following \cite{albert1993bayesian}, we model $\pi_k(\bs)$ as:
\begin{equation*}\label{eq:multi-probit}
 \sum_{i=1}^k \pi_i({\bf s}) = \Phi(\gamma_k - \mu_Z({\bf s}))
 \end{equation*}
where $\gamma_1 < \cdots < \gamma_{M-1}$ are threshold parameters, $\mu_Z(\bs)$ is a spatial process, and $\Phi(\cdot)$ denotes the cumulative distribution function (CDF) of a standard normal random variable. This formulation induces the following hierarchical model for $\tau(\bs)$ via a latent Gaussian process $Z(\bs)$:
\begin{align}
    &\tau({\bf s}) \mid Z({\bf s}),\gamma_1,\ldots,\gamma_{M-1} = \begin{cases} 1 & \text{if } Z({\bf s}) < \gamma_1, \\ 
    2 & \text{if }\gamma_1 < Z({\bf s}) \leq \gamma_2, \\ 
    \vdots \\ 
    M & \text{if } Z({\bf s}) > \gamma_{M-1};  \end{cases} \label{eq:cpmodel2}\\
     &Z({\bf s}) \mid \mu_Z({\bf s}) = \mu_Z(\bs) + \epsilon_Z(\bs), \quad \epsilon_Z(\bs)\overset{\mathrm{iid}}{\sim} \mathcal{N}(0,1).\label{eq:cpmodel1}
\end{align}
To capture spatial correlation in the changepoint process, we place a Gaussian prior on $\boldsymbol{\mu}_Z  = (\mu_Z(\bs_1),\ldots, \mu_Z(\bs_N))\trans$: 
\[\boldsymbol{\mu}_Z \sim \mathcal{N}(m_Z,\bSigma_Z).\] 
Under this construction, spatial correlation in $\boldsymbol{\tau}$ is propagated through $\boldsymbol{\mu}_Z$ while the hierarchical Gaussian structure yields conjugacy, allowing the changepoints to be sampled from the exact posterior distribution via a Gibbs sampler.
    
Since \eqref{eq:cpmodel2} is invariant under translating and scaling, it is necessary to impose restrictions on the parameters to ensure model identifiability. In the standard categorical regression setting where all categories are represented in the data, identifiability is typically achieved by fixing one threshold (e.g., $\gamma_1=0$) and the marginal variance of the latent Gaussian process $\boldsymbol{\mu}_Z$. In the context of changepoint detection, however, it is possible that some categories are not observed. In particular, if the set $\{\bs_i : \tau(\bs_i) \in \{1,2\}\}$ is empty, the observed data only imply that the latent variables $Z(\bs)$ exceed higher thresholds and provide no information about how far $\mu_Z(\bs)$ lies above the first threshold $\gamma_1$. Consequently, fixing $\gamma_1$ and the marginal variance alone is insufficient to identify the location of the latent mean, and an additional constraint on the mean parameter $m_Z$ is required to ensure identifiability. The choice of $m_Z$ directly influences the probability of early changepoints, since $\mathbb{P}(\tau(\bs)=1)=\Phi(\gamma_1 - \mu_Z(\bs))$ depends on the location of $\mu_Z(\bs)$ relative to $\gamma_1$. In applications where changepoints at the very beginning of the time series are implausible, a natural choice is to set $\gamma_1=0$ and $m_Z$ to a sufficiently large positive value, thereby assigning low prior probability to $\tau(\bs)=1$. This choice should be viewed as application-dependent rather than universal. Alternative but equivalent constraints, such as fixing $m_Z=0$ and shifting $\gamma_1$, are also possible.

Using $\gamma_k \mid \gamma_{1:k-1,k+1:M}\sim \text{Unif}(\gamma_{k-1},\gamma_{k+1})$ as the prior for $k = 2,\ldots, M-1$, the resulting full conditional distribution for $\gamma_k$ is
\begin{align*}
    [\gamma_k \mid \cdot] &\propto [\boldsymbol{\tau}\mid \gamma_{1:M}, \mathbf{Z}][\gamma_k\mid\gamma_{1:k-1,k+1:M}]\\
    &\propto \prod_{i=1}^N(\mathds{1}_{\tau(\bs_i)=k+1}\mathds{1}_{\gamma_k<Z(\bs_i)} +  \mathds{1}_{\tau(\bs_i)=k}\mathds{1}_{Z(\bs_i)\leq \gamma_k}) \cdot \mathds{1}_{\gamma_{k-1}<\gamma_k<\gamma_{k+1}}\\
    &\sim \text{Unif}\Big(\max\big\{\max_\bs\{Z({\bf s})\mid \tau({\bf s}) = k\},\gamma_{k-1}\big\},\min\big\{\min_\bs\{Z({\bf s})\mid \tau({\bs})=k+1\},\gamma_{k+1}\big\}\Big).
\end{align*}
The remaining full conditionals are given by
\begin{align}
    &[Z({\bf s}) \mid \cdot, \tau({\bf s}) = k]
    \propto [\tau(\bs)=k|Z(\bs),\gamma_{1:M}][Z(\bs)\mid\mu_z(\bs)] \nonumber \\
    &\phantom{[Z({\bf s}) \mid \cdot, \tau({\bf s}) = k]}\overset{\mathrm{ind}}{\sim} \mathcal{TN}(\mu_z({\bf s}),1,a=\gamma_{k-1},b=\gamma_k) \nonumber \\
    &[\boldsymbol{\mu}_Z \mid \cdot] \sim \mathcal{N}(m_Z + (\mathbf{I}+\boldsymbol{\Sigma}_Z^{-1})^{-1}(\mathbf{Z}-m_Z),(\mathbf{I}+\boldsymbol{\Sigma}_Z^{-1})^{-1}), \label{eq:muZ-post}
    \end{align}
where $\mathcal{TN}(m,s,a,b)$ denotes the truncated normal distribution with mean $m$, variance $s$, and truncation interval $[a, b]$. The full conditional for $\boldsymbol{\mu}_Z$ follows directly from normal-normal conjugacy, as commonly presented in the Bayesian literature \citep[e.g.,][]{cressie2011statistics}. 

Together, these closed-form full conditional distributions enable a fully Gibbs-sampled inference procedure for the changepoint model \eqref{eq:cpmodel2}–\eqref{eq:cpmodel1}, with spatial dependence propagated through the latent process $\boldsymbol{\mu}_Z$. After sampling $\mu_Z(\bs)$ and $\gamma_{2:M-1}$ from their full conditionals, $\tau(\bs)$ can be updated independently across locations using \eqref{eq:tau_post}, with
\[
\pi_k(\bs) = \Phi(\gamma_k-\mu_Z(\bs)) - \Phi(\gamma_{k-1}-\mu_Z(\bs)).
\]

\subsection{Spherical harmonic representation for Gaussian random fields on a sphere}\label{sec:sh}
Although the multinomial probit formulation for $\boldsymbol{\tau}$ eliminates the need for Metropolis–Hastings updates, posterior sampling of the latent process $\boldsymbol{\mu}_Z$ remains computationally challenging when the spatial dimension $N$ is large due to the dense covariance matrix $\boldsymbol{\Sigma}_Z$. In Euclidean settings, this challenge is often addressed by working in the Fourier domain, where diagonalization enables efficient computation \citep{paciorek2007computational}. For data observed on the sphere, we adopt an analogous strategy using spherical harmonics.

We first provide a brief review of Gaussian random fields on the unit sphere. A unit sphere in $\mathbb{R}^3$ is defined as 
$\mathbb{S}^2 = \{(x,y,z) \in \mathbb{R}^3 \mid \sqrt{x^2+y^2+z^2} = 1 \}.$
In spherical coordinates, this can be equivalently written as 
\[\mathbb{S}^2 = \{(r,\theta,\phi) \in \mathbb{R}^3 \mid r=1,0\leq\theta\leq\pi, 0\leq \phi < 2\pi\},\]
\noindent where $\theta$ is the polar angle (colatitude) and $\phi$ is the azimuth (longitude), respectively.  For convenience, we denote a location on the sphere by 
$\bs = (\theta,\phi)$, omitting the radial component $r$ since it is fixed at 1.

Let $E$ be a measure space with measure $\mu$, and let $F$ be a Banach space. The space of square integrable functions from $E$ to $F$ is defined as
\[L^2(E,F) = \left\{f:E \to F \mid \int_{x \in E} \lVert f(x)\rVert^2_F d\mu(x) < \infty\right\}.\]
When $F = \mathbb{R}$ or $\mathbb{C}$, we omit $F$ and write $L^2(E).$
The $L^2$ space is a Hilbert space and has a well-defined inner product:
\[\langle f,g \rangle_{L^2(E)} = \int_E f(x)g(x)d\mu(x).\]
For $E = \mathbb{S}^2$, the inner product is defined using the surface integral:
\begin{align*}
    \langle f,g \rangle_{L^2(\mathbb{S}^2)} %&= \int_{\bs \in \mathbb{S}^2} f(\bs)g(\bs)d\bs\\
    &=\int_0^{2\pi}\int_0^\pi f(\theta,\phi)g(\theta,\phi)\sin\theta d\theta d\phi.
\end{align*}
Let $(\Omega,\mathcal{F},\mathbb{P})$ be a probability space. A Gaussian random field (GRF) $X$ on the sphere is a measurable mapping $X: \Omega \to L^2(\mathbb{S}^2)$ such that the vector of random variables $(X(\bs_1),\ldots,X(\bs_n)) \in \mathbb{R}^n$ are jointly Gaussian for any collection $\bs_1, \ldots, \bs_n \in \mathbb{S}^2$, $n \in \mathbb{N}$. In this paper, we restrict attention to GRFs that are square-integrable, i.e., $X \in L^2(\Omega, L^2(\mathbb{S}^2))$, and isotropic, i.e., $(X(\bs_1),\ldots,X(\bs_n))\trans \overset{d}{=} (X(g(\bs_1)),\ldots,X(g(\bs_n)))\trans$  for all $g \in SO(3)$, where $SO(3)$ denotes the group of rotations on $\mathbb{S}^2$.
%note to self: When $Y$ is a measure space, $L^2(X,L^2(Y))$ is isomorphic to $L^2(X\times Y)$ by Fubini's theorem

Assuming that $\mu_Z(\bs)$ is a square-integrable, isotropic GRF, it admits a spectral expansion \citep{lang2015isotropic}
\begin{align}\label{eq:kl_exp}
    \mu_Z(\theta,\phi) = m_Z + \sum_{l=0}^\infty \sum_{m=-l}^l \alpha_{lm}\psi_{lm}(\theta,\phi), 
\end{align}
where $\psi_{l,m}(\theta,\phi)$ is the spherical harmonic function of degree $l$ and order $m$ evaluated at location $(\theta,\phi) \in \mathbb{S}^2$ and $\alpha_{lm} = \langle \mu_z,\psi_{lm} \rangle_{L^2(\mathbb{S}^2)}$ are independent Gaussian random variables with $\alpha_{lm} \sim N(0,S_l),$ where $S_l$ is the angular power spectrum of $\mu_Z.$
The values for $\{S_l\}_{l=0,1,\ldots}$ determine the covariance structure of $\mu_Z$ and thus the covariance matrix $\boldsymbol{\Sigma}_Z$. 
% However, estimating an infinite or a large number of spectral coefficients is impractical, especially in the absence of strong prior knowledge about their precise structure. To ensure tractability while maintaining flexibility, we opt to model the angular power spectrum $S_l$ using a parametric family of covariance functions. 

To specify $S_l$, we employ a Matérn-type covariance model via the stochastic partial differential equation (SPDE) representation. The Mat\'ern covariance function is a popular choice for modeling the covariance of isotropic spatial processes due to its flexibility in controlling the smoothness of the underlying field \citep{stein1999interpolation}. In Euclidean space, the Mat\'ern covariance between two locations separated by distance $h$ is given by 
\[C(h) = \frac{\sigma^2}{2^{\nu-1}\Gamma(\nu)}(\kappa h)^\nu K_\nu(\kappa h ),\]
where $K_\nu$ is the modified Bessel function, $\nu > 0$ is the smoothness parameter, $\kappa > 0$ is the inverse range parameter, and $\sigma^2 > 0$ is the scaling constant. In $\mathbb{R}^d$, \cite{whittle1954stationary} showed that a Matérn GRF arises as the stationary solution to 
\begin{align}\label{eq:SPDE}
    \tilde{\sigma}^{-1}(\kappa^2 - \Delta)^{(v + d/2)/2}X({\bf s}) = \mathcal{W}({\bf s}), \quad \kappa > 0, v > 0,
\end{align}
where $\tilde{\sigma}^2 = \Gamma(\nu + d/2)(4\pi)^{d/2}\kappa^{2\nu}\sigma^2/\Gamma(\nu)$, $\Delta = \partial^2/\partial x_1^2 + \ldots + \partial^2/\partial x_d^2$ is the Laplacian,  and $\mathcal{W}$ is Gaussian white noise in $\mathbb{R}^d$. This construction extends naturally to compact manifolds through the eigenfunctions of the Laplace–Beltrami operator \citep{lindgren2011explicit}. For $\mathbb{S}^2,$ the resulting spectral representation coincides with \eqref{eq:kl_exp}, with 
\begin{equation*}\label{eq:S_l}
    \alpha_{lm} \sim \mathcal{N}(0,\sigma^2_Z(\kappa^2+l(l+1))^{-(\nu+1)} ).
\end{equation*}
See Appendix~\ref{app:appendix_a} for detailed derivation.
% it can be shown (Appendix~\ref{app:appendix_a}) that the solution to \eqref{eq:SPDE} has a spectral representation given in \eqref{eq:kl_exp} with $S_l= \tilde{\sigma}^2(\kappa^2+l(l+1))^{-(\nu+1)}$. Therefore, in \eqref{eq:kl_exp} we assume
% \begin{align*}\label{eq:S_l}
%     &\alpha_{lm} \sim \mathcal{N}(0,\sigma^2_Z(\kappa^2+l(l+1))^{-(\nu+1)} ),
% \end{align*}
% and $\alpha_{lm} , \alpha_{l'm'}$ are independent for all $(l,m) \neq (l',m')$, $0 \leq |m| \leq l \leq n$ for all $n\in\mathbb{N}$.  

For practical implementation, the expansion in \eqref{eq:kl_exp} is truncated at a finite degree $L$. We define the truncated approximation to $\mu_Z$ as:
\[ \mu_Z^L(\theta,\phi) = m_Z + \sum_{l=0}^L \sum_{m=-l}^l \alpha_{lm}\psi_{lm}(\theta,\phi).\]
For $\nu>1/2$, $\mu_Z^L$ converges to $\mu_Z$ in $L^2$ as $L\to\infty$ \citep{lang2015isotropic}. However, rate of convergence of $\mu_Z^L$ itself is not of primary interest, as the process $\mu_Z$ serves as a latent auxiliary variable whose role is to induce dependence in the changepoint process via discretization rule in \eqref{eq:cpmodel2}. Rather, the relevant question is how the truncation error in $\mu_Z^L$ propagates through the multinomial probit model and affects the induced changepoint $\tau^L$. 

The following result provides probabilistic guarantees on the deviation of the truncated changepoint $\tau^L$ from $\tau$ as a function of the truncation error in the latent process $Z^L$:
\begin{thm}
\label{thm:cp_trunc}
Assume $m_Z = 0$ and $\nu > 1/2.$ Define $v_Z := \sigma^2\sum_{l=0}^L (2l+1)(\kappa^2 + l(l+1))^{-\nu + 1}$ and $\Delta_{k,a}(z)
:=
\min\left\{
\gamma_{\lfloor k+a,M \rfloor}-z,\;
z-\gamma_{\lceil k-a-1,0 \rceil}
\right\}$. The marginal distribution of the latent process $Z^L$ is given by $Z^L(\bs) \sim \mathcal{N}(0,v_z+1),$  and
\begin{align*}
\mathbb{P}\!\left(\lvert \tau^L(\bs)-\tau(\bs)\rvert \le a\right)
&\ge
\sum_{k=1}^M
\mathbb{E}\!\left[
2\Phi\!\left(
\frac{\Delta_{k,a}(Z^L(\bs))}
{\sigma_Z L^{-\nu}
\sqrt{\frac{1}{\nu}+\frac{1}{L(2\nu+1)}}
}
\right)
-1
\;\middle|\;
\gamma_{k-1}\le Z^L(\bs)\le \gamma_k
\right]
\\
&\hspace{1.2cm}\times
\left[
\Phi\!\left(\frac{\gamma_k}{\sqrt{v_Z+1}}\right)
-
\Phi\!\left(\frac{\gamma_{k-1}}{\sqrt{v_Z+1}}\right)
\right], \ \  a = 0, \ldots, M-1.
\end{align*}
\end{thm}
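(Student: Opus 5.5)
We couple the two changepoint processes through the same noise and coefficients: set $Z(\bs) = \mu_Z(\bs) + \epsilon_Z(\bs)$ and $Z^L(\bs) = \mu_Z^L(\bs) + \epsilon_Z(\bs)$, with the same $\{\alpha_{lm}\}$ and the same $\epsilon_Z(\bs)$. Then $Z(\bs) - Z^L(\bs) = R_L(\bs) := \sum_{l>L}\sum_{m}\alpha_{lm}\psi_{lm}(\bs)$. Because the $\alpha_{lm}$ for $l>L$ are independent of those for $l\le L$ and of $\epsilon_Z$, the tail $R_L(\bs)$ is independent of $Z^L(\bs)$. Both pieces are Gaussian, and the addition theorem $\sum_{m=-l}^l \psi_{lm}(\bs)^2 = (2l+1)/(4\pi)$ gives their variances. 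In particular, $Z^L(\bs)$ has mean $m_Z=0$ and variance $1+\sum_{l\le L}(2l+1)S_l/(4\pi)$. This is the stated $v_Z+1$, up to how the constants in $\sigma^2$ are absorbed; I would follow the paper's normalisation here. The marginal claim follows.

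Next I would bound the tail variance:
\[
\operatorname{Var}R_L(\bs)=\frac{\sigma_Z^2}{4\pi}\sum_{l>L}(2l+1)(\kappa^2+l(l+1))^{-(\nu+1)} \le \sigma_Z^2\sum_{l>L}(2l+1)\,l^{-2\nu-2}.
\]
I would bound this sum by an integral comparison, using $(2l+1)l^{-2\nu-2}=2l^{-2\nu-1}+l^{-2\nu-2}$ and $\sum_{l>L}l^{-p}\le \int_L^\infty x^{-p}\,dx = L^{1-p}/(p-1)$. This gives
\[
\operatorname{Var}R_L \le \sigma_Z^2\big(L^{-2\nu}/\nu + L^{-2\nu-1}/(2\nu+1)\big)=\sigma_Z^2L^{-2\nu}\big(\tfrac1\nu+\tfrac1{L(2\nu+1)}\big)=: s_L^2.
\]
This matches the denominator in the statement. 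Getting the constants to match exactly (factor $4\pi$, dropping $\kappa^2$) is the fiddliest step, but it is only a monotone bound. The condition $\nu>1/2$ guarantees convergence.

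Finally, I would carry out the event analysis. Condition on $\tau^L(\bs)=k$, i.e.\ $\gamma_{k-1}\le Z^L\le\gamma_k$, with the conventions $\gamma_0=-\infty$ and $\gamma_M=\infty$, and with indices clipped into $\{0,\dots,M\}$ as in $\Delta_{k,a}$. Then $|\tau-k|\le a$ holds whenever $Z=Z^L+R_L$ lies in $(\gamma_{k-a-1},\gamma_{k+a}]$. This is implied by $|R_L|\le \Delta_{k,a}(Z^L)$, the distance from $Z^L$ to the nearer of the two enlarged thresholds. By independence, conditionally on $Z^L$,
\[
\mathbb{P}(|R_L|\le\Delta\mid Z^L)=2\Phi(\Delta/\operatorname{sd}R_L)-1\ge 2\Phi(\Delta/s_L)-1,
\]
since $\Delta\ge0$ and a larger standard deviation only shrinks this probability. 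Taking the conditional expectation given $\gamma_{k-1}\le Z^L\le\gamma_k$ and then applying the law of total probability over $k$, with $\mathbb{P}(\tau^L=k)=\Phi(\gamma_k/\sqrt{v_Z+1})-\Phi(\gamma_{k-1}/\sqrt{v_Z+1})$, yields the bound for each $a=0,\dots,M-1$.
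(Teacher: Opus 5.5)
Your proposal is correct and follows essentially the same route as the paper. The paper likewise uses the independence of the tail $\mu_Z^L(\bs)-\mu_Z(\bs)$ from $Z^L(\bs)$, the same integral-comparison bound $\sigma_Z^2L^{-2\nu}\bigl(\tfrac1\nu+\tfrac1{L(2\nu+1)}\bigr)$ on its variance (Proposition~\ref{prop:conv}), and the law of total probability over the bin containing $Z^L(\bs)$; it writes out only $a=0$ and simply asserts the marginal law, both of which you handle explicitly. One small correction: the variance you derive, $1+\frac{\sigma_Z^2}{4\pi}\sum_{l\le L}(2l+1)(\kappa^2+l(l+1))^{-(\nu+1)}$, is the right one, but the stated $v_Z$ (exponent $-\nu+1$, no factor $1/(4\pi)$) differs from it by more than a normalisation of $\sigma^2$, so flag it as a typo rather than claim agreement up to constants.
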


The bounds in Theorem~\ref{thm:cp_trunc} provide direct control over the truncation-induced error in changepoint estimation at each location and can be used to obtain upper bounds on the expected mean absolute error $\mathbb{E}(\lVert \boldsymbol{\tau}^L - \boldsymbol{\tau}\rVert_{L^1(\mathbb{S}^2)})$. Detailed proofs, along with numerical illustrations under varying truncation levels and parameter settings, are provided in Appendix~\ref{app:appendix_trunc}.

In matrix form, the truncated process can be written as
\begin{align}\label{eq:spec}
    \boldsymbol{\mu}_Z^L = m_Z + {\bf \Psi}\boldsymbol{\alpha},
\end{align}
where $\boldsymbol{\mu}_Z^L = (\mu_Z^L(\mathbf{s}_1), \ldots , \mu_Z^L(\mathbf{s}_N))^T$, ${\bf \Psi}$ is an $N \times (L+1)^2$ matrix consisting of entries $\bPsi_{ij} = \psi_{l_j m_j}(\bs_i)$ with $0 \leq l_j \leq L$ and $-l_j \leq m_j \leq l_j$, and $\boldsymbol{\alpha}$ is an $(L+1)^2$-dimensional vector of spectral coefficients. 

Although the independence of $\alpha_{lm}$ induces a diagonal prior covariance for $\boldsymbol{\alpha}$, posterior inference remains computationally prohibitive due to the cost of constructing and multiplying by $\boldsymbol{\Psi}$. To address this bottleneck, we assume observations lie on a regular spatial grid, allowing matrix operations involving $\boldsymbol{\Psi}$ to be replaced by spherical harmonic transforms. This assumption is natural for many large-scale climate datasets, which are commonly distributed on regular latitude–longitude grids. Extensions to irregularly spaced data can be achieved through interpolation or grid augmentation \citep{paciorek2007bayesian}.

While spherical harmonic transforms and their fast implementations are well established \citep[e.g.,][]{mohlenkamp1999fast, healy2003ffts}, their standard algorithmic descriptions focus on numerical evaluation and do not explicitly characterize the associated linear operators, making their direct use in Bayesian inference less straightforward. To bridge this gap, we propose the following matrix-based formulation that explicitly links spherical harmonic transforms to weighted linear mappings in a Gaussian framework.

A central ingredient of this formulation is the use of a quadrature rule that admit exact integration of spherical functions. For such quadrature rules, the inner product of two spherical harmonic functions can be expressed as
\begin{equation*}
\delta_{ll'}\delta_{mm'} =   \langle \psi_{lm}, \psi_{l'm'} \rangle_{\mathbb{S}^2} = \sum_{i=1}^{N} \psi_{lm}({\bf s}_i)\psi_{l'm'}({\bf s}_i)w({\bf s}_i),
\end{equation*}
where $w(s_i)$ denotes the quadrature weights. Writing $\mathbf{D}_w$ for the diagonal matrix of these weights, this implies the weighted discrete orthogonality condition $\boldsymbol{\Psi}\trans \mathbf{D}_w \boldsymbol{\Psi} \;=\; \mathbf{I}.$
Consequently, left multiplication by $\boldsymbol{\Psi}\trans \mathbf{D}_w$ corresponds to the discrete spherical harmonic transform, while left multiplication by $\boldsymbol{\Psi}$ yields the inverse transform. This representation allows quantities defined in the spatial domain to be expressed linearly in the spectral domain, facilitating characterization of covariance structures and conjugate posterior updates while allowing the computation to be carried out without explicitly constructing $\boldsymbol{\Psi}$ or $\mathbf{D}_w.$ 

In this paper, we adopt the Driscoll-Healy quadrature \citep{ driscoll1994computing}, which assumes a $K\times 2K$ equally spaced grid with even $K$ and admits exact quadrature for spherical harmonics up to degree $L = K/2 - 1.$ This grid algins naturally with regular latitude--longitude (``plate carr\'ee'') formats commonly used in climate datasets and is supported by fast implementations in the \texttt{SHTools} library \citep{wieczorek2018shtools}. 
% (Equation~9, \cite{driscoll1994computing}; an explicit expression can be found in Appendix C)

Let $\mathbf{D}_\alpha$ denote the diagonal matrix of spectral variances $S_l$. The full conditional distribution of $\boldsymbol{\alpha}$ is then given by
\[
\boldsymbol{\alpha} \mid \cdot \sim \mathcal{N}\left( \left( ({\bf\Psi\trans D}_w^2{\bf\Psi})^{-1} + \mathbf{D}_{\alpha}^{-1}\right)^{-1}({\bf\Psi\trans D}_w^2{\bf\Psi})^{-1}{\bf \Psi\trans D}_w({\bf Z} - m_Z), \left( ({\bf\Psi\trans D}_w^2{\bf\Psi})^{-1} + \mathbf{D}_{\alpha}^{-1}\right)^{-1}  \right).
\]
% This expression parallels the full conditional for the spatial-domain latent vector $\boldsymbol{\mu}_Z$ in \eqref{eq:muZ-post}, with the identity matrix replaced by the transform-induced term $\boldsymbol{\Psi}\trans \mathbf{D}_w^2 \boldsymbol{\Psi}$ and with $\boldsymbol{Z}-m_Z$ mapped to the spectral domain via $\boldsymbol{\Psi}\trans D_w(\boldsymbol{Z}-m_Z)$. 
At first glance, this expression appears challenging to sample from due to the involved matrix operations. However, the matrix ${\bf D}_\alpha^{-1}$ is diagonal as discussed previously, and the term ${\bf \Psi\trans D}_w({\bf Z} - m_Z)$ can be evaluated efficiently using spherical harmonic transform, avoiding explicit matrix multiplication. The remaining challenge is the matrix $\boldsymbol{\Psi}\trans \mathbf{D}_w^2 \boldsymbol{\Psi}$, which arises from the spherical transformation of the measurement error $\boldsymbol{\epsilon}_Z \sim \mathcal{N}(0,\mathbf{I})$. Unlike in Euclidean settings where iid measurement error coincides with white noise and the Fourier transform preserves independence, the spherical harmonic transform induces a structured correlation in the transformed errors. This phenomenon does not arise in standard FFT-based spectral methods (e.g., \citealp{paciorek2007computational}) and must be explicitly addressed for valid posterior inference on the sphere. Importantly, this matrix depends only on the quadrature weights and the basis functions and thus can be precomputed and reused across MCMC iterations. Moreover, we establish in the following result that $\boldsymbol{\Psi}\trans \mathbf{D}_w^2 \boldsymbol{\Psi}$ admits a sparse, block-diagonal structure, enabling efficient posterior updates.

\begin{thm}
\label{thm:sparse}
${\bf \Psi\trans  D}_w^2{\bf \Psi}$ is sparse and has $\sum_{m=-L}^L \lceil \frac{(L+1-|m|)^2}{2}\rceil$ non-zero entries and \\$\sum_{m=0}^L \lceil \frac{(L+1-|m|)^2}{2}\rceil$ distinct non-zero entries. Further, it can be arranged such that it is block diagonal.
\end{thm}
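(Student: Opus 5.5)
The plan is to compute the entries of $\mathbf{\Psi}\trans\mathbf{D}_w^2\mathbf{\Psi}$ explicitly on the Driscoll--Healy grid, using the product structure of that grid. Index the columns of $\boldsymbol{\Psi}$ by $(l,m)$ with real spherical harmonics $\psi_{lm}(\theta,\phi)=N_{lm}P_l^{|m|}(\cos\theta)\,g_m(\phi)$, where $g_m(\phi)=\cos(m\phi)$ for $m\ge 0$ and $g_m(\phi)=\sin(|m|\phi)$ for $m<0$. The grid points are $\theta_j=\pi j/K$ for $j=0,\ldots,K-1$ and $\phi_k=\pi k/K$ for $k=0,\ldots,2K-1$. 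The Driscoll--Healy weights depend only on $\theta_j$, i.e.\ $w(\bs_{jk})=w_j$, so
\[
(\mathbf{\Psi}\trans\mathbf{D}_w^2\mathbf{\Psi})_{(lm),(l'm')}
= N_{lm}N_{l'm'}\Big(\sum_{j=0}^{K-1} w_j^2\,P_l^{|m|}(\cos\theta_j)P_{l'}^{|m'|}(\cos\theta_j)\Big)\Big(\sum_{k=0}^{2K-1} g_m(\phi_k)g_{m'}(\phi_k)\Big).
\]
The proof then rests on two separate vanishing arguments, one for the longitude sum and one for the colatitude sum.

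\textbf{Step 1 (block diagonality).} The $\phi$-sum is a discrete trigonometric sum over $2K$ equispaced points. Since $|m|,|m'|\le L=K/2-1<K$, discrete orthogonality of $\{\cos(m\phi),\sin(m\phi)\}$ on this grid applies, with no aliasing because $|m\pm m'|<2K$. So the $\phi$-sum vanishes unless $m=m'$, and it equals $K$ (or $2K$ when $m=0$) when $m=m'$. Ordering the columns by $m$ first then makes the matrix block diagonal. There are $2L+1$ blocks, and the block for order $m$ has size $n_m=L+1-|m|$, indexed by $l=|m|,\ldots,L$.

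\textbf{Step 2 (parity within a block).} Within the block for order $m$, the entry reduces to $\sum_j w_j^2 P_l^{|m|}(\cos\theta_j)P_{l'}^{|m|}(\cos\theta_j)$. I will use $P_l^{m}(-x)=(-1)^{l+m}P_l^{m}(x)$ together with two facts about the grid. First, it is symmetric under $\theta\mapsto\pi-\theta$ via $j\leftrightarrow K-j$ for $j=1,\ldots,K-1$. Second, the DH weights satisfy $w_j=w_{K-j}$. The unpaired pole $j=0$ contributes nothing, because $w_0=0$ (the DH weight carries a $\sin\theta_j$ factor); for $m\neq0$ this also follows from $P_l^{|m|}(1)=0$. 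Pairing the terms $j$ and $K-j$ then shows that the entry vanishes whenever $l+l'$ is odd. The surviving pairs $(l,l')\in\{|m|,\ldots,L\}^2$ with $l\equiv l' \pmod 2$ number exactly $\lceil n_m^2/2\rceil$. Summing over $m=-L,\ldots,L$ gives the stated total count of nonzero entries.

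\textbf{Step 3 (distinct entries).} For $m>0$, the blocks for $m$ and $-m$ are identical, because $\sum_k\cos^2(m\phi_k)=\sum_k\sin^2(m\phi_k)=K$ and the normalizing constants agree. Hence the distinct blocks are those with $m=0,\ldots,L$, which gives $\sum_{m=0}^L\lceil n_m^2/2\rceil$. This count treats the block entries as distinct and ignores the symmetry $(l,l')\leftrightarrow(l',l)$, which matches the statement's convention.

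\textbf{Main obstacle.} The delicate part is asserting that the remaining same-parity entries are genuinely nonzero rather than merely not forced to vanish. The diagonal entries are strictly positive, since they are weighted sums of squares and the Legendre function is not identically zero on the interior nodes. For the off-diagonal even-parity entries, I would note that exact quadrature does \emph{not} apply to $\mathbf{D}_w^2$: the squared weights do not integrate products of degree up to $2L$ to Kronecker deltas. I would then either argue nonvanishing generically or verify it via the explicit DH weight formula. Failing that, I would state the count as an upper bound that is attained in practice, which is all that the computational claim needs.
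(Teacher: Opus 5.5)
Your proposal is correct. Step~1 matches the paper: both use the product structure of the Driscoll--Healy grid and discrete trigonometric orthogonality to get $m=m'$ with factor $2K/(2-\delta_{0m})$.

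Your parity step takes a genuinely different route. The paper converts the weighted sum $\sum_i \bar P_{lm}\bar P_{l'm}w_i^2$ into $\frac{\pi}{2K}\int_0^\pi \bar P_{lm}(\cos\theta)\bar P_{l'm}(\cos\theta)\sin^2\theta\,d\theta$ using a square-wave Fourier series for the weights. It then extracts ``$l+l'$ even'' from a recurrence for $\sqrt{1-x^2}P_{lm}$ and the 3-$j$ selection rules for overlap integrals of associated Legendre functions. You instead apply the reflection $j\mapsto K-j$ directly to the discrete sum, using $w_j=w_{K-j}$, $w_0=0$ and $P_l^{m}(-x)=(-1)^{l+m}P_l^{m}(x)$. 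It is cleanest to phrase this as the re-indexing $S=-S$ of the whole sum, so the self-paired equatorial node $j=K/2$ is visibly covered.

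Your route is shorter and exact, and it avoids the weak point of the paper's argument. As you observe, exact quadrature does not survive squaring the weights, and the paper's sum-to-integral identity is in fact false. For $K=4$ the three interior weights are equal, so the diagonal entries indexed by $(l,m)=(0,0)$ and $(1,0)$ are equal, whereas the corresponding integrals are in ratio $4:3$. Only the paper's parity conclusion is sound, and your symmetry argument establishes it directly. Had the identity held, the paper's route would have given closed-form entries.

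You also justify the distinct-entry count by showing the blocks for $m$ and $-m$ coincide, which the paper's proof never addresses.

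Your ``main obstacle'' is real, but it is not a defect relative to the paper. The paper's proof likewise only shows that entries vanish unless $m=m'$ and $l+l'$ is even; it never shows that the surviving off-diagonal entries are nonzero. Stating the counts as upper bounds, with strictly positive diagonal, is exactly what both arguments prove.
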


See Appendix~\ref{app:sh_proof} for the proof. This structural property is central to making the Gibbs updates practical for large $N$ and substantially improves computational efficiency in both runtime and memory usage. Specifically, whereas matrix–vector multiplication for a dense $n \times n$ matrix requires $O(n^2)$ operations, the computational cost is reduced to $O(k)$ for a sparse matrix with $k$ nonzero entries \citep{rue2005gaussian}.

\subsection{SPDE-based spherical harmonic representation for spatiotemporal processes}\label{sec:st}
Similar to the latent spatial field $\boldsymbol{\mu}_Z,$ direct sampling of the spatiotemporal process $U(\bs, t)$ in \eqref{eq:model} is computationally challenging due to the high dimensionality of the associated covariance matrix. To address this issue, we adopt a strategy analogous to that in Section~\ref{sec:sh} and construct $U(\bs,t)$ within an SPDE framework. To facilitate spectral representation, we exploit the fact that spherical harmonics form an eigenbasis of the Laplace–Beltrami operator on $\mathbb{S}^2,$ and model $U(\bs,t)$ as the solution to a stochastic reaction–diffusion equation with additive noise and initial condition $U(\cdot,0) = \cU_0 \in L^2(\Omega,L^2(\mathbb{S}^2))$: 
\begin{align}\label{eq:heat}
d \cU(t) - \xi_d \Delta \cU(t)dt + \xi_r \cU(t)dt = d\mathcal{E}_Q(t),
\end{align}
where $\cU$ is an $L^2(\Omega,L^2(\mathbb{S}^2))$-valued process defined on the interval $[0,M]$ such that $(\cU(t))(\bs) = U(\bs,t)$, and $\mathcal{E}_Q(t) = \int_0^t d\mathcal{E}_Q(t')$ is a $Q$-Wiener process \citep{da2014stochastic}. This provides a dynamic extension of the spatial SPDE model in Section~\ref{sec:sh}

A Q-Wiener process $\mathcal{E}_Q$ in $L^2(\Omega,L^2(\mathbb{S}^2))$ admits a spherical harmonic expansion
\begin{align*}
  \mathcal{E}_Q(t) = \sum_{l=1}^\infty\sum_{m=-l}^l \sqrt{q_{lm}}\beta_{lm}(t)\psi_{lm},  
\end{align*}
where $(\beta_{lm}), l\in\mathbb{N},m=-l,\ldots,l$ is a sequence of independent Brownian motions and $q_{lm}$ are the spectral variances with values determined by the precision operator $Q.$ To obtain the values for $q_{lm},$ we again consider Mat\'ern class of covariance functions and set $Q^{1/2} = \sigma_Q(\kappa^2 - \Delta)^{-(\nu+1)/2}.$ Under this specification, $\mathcal{E}_Q(t) /\sqrt{t}$ can be obtained as the solution to Whittle-Mat\'ern SPDE in \eqref{eq:SPDE}, with $q_{lm} = \sigma^2_Q(\kappa^2+l(l+1))^{-(\nu+1)}.$ 

Since $\cU(t)$ is square-integrable, it admits an expansion
\begin{align}\label{eq:kl_U}
    \cU(t) &= \sum_{l=0}^\infty \sum_{m=-l}^l \langle \cU(t),\psi_{lm} \rangle\psi_{lm}\\
    &= \sum_{l=0}^\infty \sum_{m=-l}^l \left \langle \cU_0 + \int_0^t(\xi_r -\xi_d \Delta)\cU(t')dt' + \cE_Q(t),\psi_{lm} \right \rangle\psi_{lm} \nonumber\\
    &= \sum_{l=0}^\infty \sum_{m=-l}^l\left( \langle \cU_0,\psi_{lm}\rangle + \int_0^t \left\langle (\xi_r -\xi_d \Delta)\cU(t'),\psi_{lm}\right\rangle dt' + \sqrt{q_{lm}}\beta_{lm}(t) \right)\psi_{lm} \nonumber \\
    &= \sum_{l=0}^\infty \sum_{m=-l}^l\left( \langle \cU_0,\psi_{lm}\rangle + \int_0^t (\xi_r + \xi_dl(l+1))\left\langle \cU(t'),\psi_{lm}\right\rangle dt' + \sigma_Q(\kappa^2+l(l+1))^{-\frac{\nu+1}{2}}\beta_{lm}(t) \right)\psi_{lm}, \nonumber
\end{align}
where we used the fact that negative Laplacian on $\mathbb{S}^2$ is self-adjoint with eigenfunctions $\psi_{lm}$ with corresponding eigenvalues $l(l+1)$. 

Define $\hat{\cU}_{lm}(t) := \langle \cU(t), \psi_{lm} \rangle_{L^2(\mathbb{S}^2)}.$ Since $\beta_{lm}$ are independent, solving \eqref{eq:heat} reduces to solving the following stochastic ordinary differential equations separately for each $l = 0,1,\ldots,$ and $m = -l,\ldots,l$: 
\begin{align}\label{eq:o-u}
    \hat{\cU}_{lm}(t) = \hat{\cU}^0_{lm} + (\xi_r +\xi_d l(l+1))\int_0^t\hat{\cU}_{lm}(t')dt' + \sigma_Q(\kappa^2 + l(l+1))^{-\frac{\nu+1}{2}}\beta_{lm}(t),
\end{align}
where $\hat{\cU}_{lm}^0 = \langle \cU_0,\psi_{lm}\rangle_{L^2(\mathbb{S}^2)}.$
The solution to \eqref{eq:o-u} is an Ornstein-Uhlenbeck process and can be represented as
\begin{align}\label{eq:o-u_sol}
    \hat{\cU}_{lm}(t) &= e^{-(\xi_r+\xi_d l(l+1))t}\hat{\cU}^0_{lm} + \sigma_Q(\kappa^2+l(l+1))^{-\frac{\nu+1}{2}}\int_0^t e^{-(\xi_r+\xi_d l(l+1))(t-t')}d\beta_{lm}(t').  %\\
    % &\overset{d}= e^{-(\xi_r+\xi_d l(l+1))t}\hat{\cU}^0_{lm} + \sigma_Q(\kappa^2+l(l+1))^{-\frac{\nu+1}{2}}\sqrt{\frac{(1-e^{-2(\xi_r+\xi_d l(l+1))t})}{2(\xi_r+\xi_d l(l+1))}}\epsilon_{lm}(t), \nonumber
\end{align}
For step size $h>0,$ \eqref{eq:o-u_sol} satisfies
\begin{align*}
    \hat{\cU}_{lm}(t+h) 
    % &= e^{-(\xi_r+\xi_d l(l+1))h}e^{-(\xi_r+\xi_d l(l+1))t}\hat{\cU}^0_{lm} + \sigma_Q(\kappa^2+l(l+1))^{-\frac{\nu+1}{2}}\int_0^{t+h} e^{-(\xi_r+\xi_d l(l+1))(t + h-t')}d\beta_{lm}(t')\\
    &= e^{-(\xi_r+\xi_d l(l+1))h}\hat{\cU}_{lm}(t) + \sigma_Q(\kappa^2+l(l+1))^{-\frac{\nu+1}{2}}\int_t^{t+h} e^{-(\xi_r+\xi_d l(l+1))(t + h-t')}d\beta_{lm}(t').
\end{align*}
Thus, for uniformly and discretely sampled time points $t_i = i,$ $\hat{\cU}_{lm}$ can be expressed as an AR(1) process:
\begin{align}\label{eq:diffdyn}
    \hat{\cU}_{lm}(t) \overset{d}= e^{-(\xi_r+\xi_d l(l+1))}\hat{\cU}_{lm}(t-1) + \sigma_Q(\kappa^2+l(l+1))^{-\frac{\nu+1}{2}}\sqrt{\frac{(1-e^{-2(\xi_r+\xi_d l(l+1))})}{2(\xi_r+\xi_d l(l+1))}}\epsilon_{lm}(t),
\end{align}
where $\epsilon_{lm}(t) \overset{\mathrm{iid}}{\sim} \cN(0,1)$. 

Let ${\bf U}^L_t = (U^L(s_1,t),\ldots, U^L(s_N,t))\trans,$ where $U^L(\bs,t)$ is the truncated version of \eqref{eq:kl_U} at degree $L.$ Then, the truncated approximation to \eqref{eq:diffdyn} can be expressed compactly as
\begin{align}\label{eq:diffdyn_mat}
    {\bf \Psi\trans  D}_w{\bf U}^L_t = \boldsymbol{\xi}\circ{\bf \Psi\trans  D}_w{\bf U}^L_{t-1} + \boldsymbol{\eta}_t,
\end{align}
where $\boldsymbol{\xi}$ is a vector containing $\{e^{-(\xi_r+\xi_d l(l+1))}\}_{l=0,\ldots,L;m=-l,\ldots,l}$ and  $\boldsymbol{\eta} \sim \cN(0,{\bf D}_\eta),$ where ${\bf D}_\eta$ is a diagonal matrix consisting of $\frac{\sigma^2_Q(\kappa^2+l(l+1))^{-(\nu+1)}}{2(\xi_r+\xi_d l(l+1))}(1-e^{-2(\xi_r+\xi_d l(l+1))}).$
% The term $\hat{\cU}_{lm}(t)$ is the spherical harmonic transform of $\cU(t)$ evaluated at $(l,m)$ and corresponds to the $(l,m)^{\text{th}}$ entry of the vector ${\bf \Psi\trans  D}_w{\bf U}_t^L$. 

Equation \eqref{eq:diffdyn_mat} corresponds to a dynamic spatiotemporal model (DSTM; \citealp{cressie2011statistics}), with the full conditionals given by
\begin{align}\label{eq:postU}
    {\bf \Psi\trans D}_w{\bf U}^L_t \mid \cdot \sim \mathcal{N}(V_ta_t,V_t),
\end{align}
% \begin{align*}
%     a_0 &= \boldsymbol{\xi}\circ{\bf D}_\eta^{-1}{\bf \Psi\trans D}_w{\bf U}^L_1,   &V_0&= {\bf D}_\eta,\\
%     a_t &= \frac{({\bf \Psi\trans D}^2_w{\bf \Psi})^{-1}{\bf \Psi\trans D}_w({\bf Y}_t-\boldsymbol{\mu}_t) }{\sigma^2_\epsilon} + \boldsymbol{\xi}\circ{\bf D}_\eta^{-1}{\bf \Psi\trans D}_w({\bf U}^L_{t+1} + {\bf U}^L_{t-1}),  &V_t&= \left(\frac{({\bf \Psi\trans D}^2_w{\bf \Psi})^{-1}}{\sigma^2_\epsilon} + (\mathbf{1}+\boldsymbol{\xi}^2){\bf D}_\eta^{-1}\right)^{-1},\\
%     a_M &= \frac{({\bf \Psi\trans D}^2_w{\bf \Psi})^{-1}{\bf \Psi\trans D}_w({\bf Y}_M-\boldsymbol{\mu}_M) }{\sigma^2_\epsilon} + \boldsymbol{\xi}\circ{\bf D}_\eta^{-1}{\bf \Psi\trans D}_w{\bf U}^L_{M-1}, &V_M&= \left(\frac{({\bf \Psi\trans D}^2_w{\bf \Psi})^{-1}}{\sigma^2_\epsilon} + {\bf D}_\eta^{-1}\right)^{-1}. 
% \end{align*}
\begin{align*}
    a_0 &= \boldsymbol{\xi}\circ{\bf D}_\eta^{-1}{\bf \Psi\trans D}_w{\bf U}^L_1,\quad V_0 = {\bf D}_\eta,\\
    a_t &= \frac{({\bf \Psi\trans D}^2_w{\bf \Psi})^{-1}{\bf \Psi\trans D}_w({\bf Y}_t-\boldsymbol{\mu}_t(\boldsymbol{\tau})) }{\sigma^2_\epsilon} + \boldsymbol{\xi}\circ{\bf D}_\eta^{-1}{\bf \Psi\trans D}_w({\bf U}^L_{t+1} + {\bf U}^L_{t-1}),\\
    V_t&= \left(\frac{({\bf \Psi\trans D}^2_w{\bf \Psi})^{-1}}{\sigma^2_\epsilon} + (\mathbf{1}+\boldsymbol{\xi}^2){\bf D}_\eta^{-1}\right)^{-1},\\
    a_M &= \frac{({\bf \Psi\trans D}^2_w{\bf \Psi})^{-1}{\bf \Psi\trans D}_w({\bf Y}_M-\boldsymbol{\mu}_M(\boldsymbol{\tau})) }{\sigma^2_\epsilon} + \boldsymbol{\xi}\circ{\bf D}_\eta^{-1}{\bf \Psi\trans D}_w{\bf U}^L_{M-1}, \\
    V_M&= \left(\frac{({\bf \Psi\trans D}^2_w{\bf \Psi})^{-1}}{\sigma^2_\epsilon} + {\bf D}_\eta^{-1}\right)^{-1}. 
\end{align*}
The expansion \eqref{eq:kl_U} implies ${\bf U}^L_t = {\bf \Psi\Psi\trans D}_w{\bf U}^L_t$. Consequently, posterior sampling can be performed by first drawing the spectral coefficients ${\bf \Psi\trans D}_w{\bf U}^L_t$ and then recovering $\mathbf{U}^L_t$ via an inverse spherical harmonic transform.

The diffusion-based formulation in \eqref{eq:heat} induces a space–time non-separable structure for $U(\bs,t)$ whenever the diffusivity parameter $\xi_d > 0$, with separability attained if and only if $\xi_d = 0$. The degree of non-separability increases as $\xi_d$ departs from $0$ up and subsequently decreases as temporal dependence weakens for large $\xi_d$. A detailed derivation and characterization of this behavior are provided in Appendix~\ref{app:separability}.

Although assuming separability in space and time may potentially deteriorate statistical inference \citep{li2007nonparametric}, our numerical experiments suggest that assuming a space–time separable model for $U(\bs,t)$ has a negligible impact on changepoint estimation, even under parameter values that yield strongly non-separable dependence. Consequently, for simplicity, we adopt a space–time separable specification for $U(\bs,t)$ throughout the remainder of the paper. Under this assumption, the autoregressive vector $\boldsymbol{\xi}$ in \eqref{eq:postU} reduces to a scalar in $(0,1)$, and $\mathbf{D}_\eta$ simplifies to $\sigma^2_U \mathbf{D}_\alpha$ for some $\sigma^2_U > 0$.

\section{Simulation Study}\label{sec:simulation}
We conduct simulation studies to evaluate the ability of the multinomial probit model to improve changepoint estimation by leveraging spatial dependence. We then examine the impact of spectral truncation on estimation accuracy and computational efficiency.

\subsection{Data generation and evaluation metric}\label{sec:simdat}
To ensure that the numerical experiments reflect realistic conditions, we generated data based on a reanalysis dataset consisting of 60 months of monthly global aerosol optical depth (AOD) estimates spanning January 1985 to December 1989. We deliberately selected this time window because it contains no known major atmospheric events that could have introduced changepoints in AOD, allowing synthetic changepoints to be introduced at known locations and times for controlled assessment. Using such physically informed data, rather than observations generated from a purely statistical model, allows for a more realistic evaluation of the methodology.

We obtained the AOD data from the Modern-Era Retrospective Analysis for Research and Applications, Version 2 (MERRA-2) reanalysis dataset, using the “TOTEXTTAU” variable from the Global Modeling and Assimilation Office \citep{merra2_aod}. This variable measures extinction optical thickness at 550 nm and has a spatial resolution of $1^\circ \times 1^\circ$, corresponding to a $180 \times 360$ global grid and $N = 64{,}800$ total spatial locations. To remove systematic temporal structure, we first eliminate seasonality at each location using seasonal–trend decomposition based on LOESS (STL). The data are then log-transformed and standardized so that the baseline mean function 
$\mu_1(\mathbf{s},t)$ is approximately constant across space and time.

Synthetic mean shifts are introduced by adding a constant to all post-changepoint observations, with effect sizes in $\{1,1.5,2\}$ corresponding to low, moderate, and high signal-to-noise ratios. This yields mean functions $\mu_1(\mathbf{s}, t) = \mu_1$ and $\mu_2(\mathbf{s}, t) \in {\mu_1 + 1, \mu_1 + 1.5, \mu_1 + 2}$ in model \eqref{eq:model}. 

The changepoints $\tau(\bs)$ are generated using two different approaches. In both cases, we begin by simulating an auxilliary variable $\tilde{\tau}(\mathbf{s})$ from a mean-zero multivariate normal distribution with Mat\'ern covariance with smoothness $\nu = 1$ and inverse range parameter $\kappa \in {3,5,100}$, representing high, moderate, and near-zero spatial correlation, respectively. In the first approach, $\tilde{\tau}$ is transformed via min-max scaling:
\begin{equation}\label{eq:tau_method1}
\tau_1(\bs) := \left\lfloor \frac{\tilde{\tau}(\bs) - \min(\tilde{\tau}(\bs))}{\max(\tilde{\tau}(\bs))-\min(\tilde{\tau}(\bs))}*49 + 6 \right\rfloor.
\end{equation}
This transformation maps $\tilde{\tau}(\bs)$ to the discrete set $\{6,7,\ldots,55\}$ and induces marginal probabilities $\pi_k(\bs)$ that favor changepoints near the center of the time series.
% Under this transformation,  the marginal probabilities are given by $\pi_k(\bs) = \mathbb{P}\left( \frac{k-6}{49} < \frac{\tilde{\tau}(\bs) - \min(\tilde{\tau}(\bs))}{\max(\tilde{\tau}(\bs))-\min(\tilde{\tau}(\bs))} \leq \frac{k-5}{49}\right),$ which assigns greater weight to the changepoints near the middle of the time series. 

In the second approach, changepoints are generated via the Gaussian CDF $\Phi(\cdot)$:
\begin{equation}\label{eq:tau_method2}
\tau_2(\bs) := \left\lfloor \Phi(\tilde{\tau}(\bs)) \times 50 + 6 \right\rfloor .
\end{equation}
This transformation yields an approximately discrete uniform marginal probabilities over the same support ($\pi_k(\bs) \approx 1/50$ for all $k\in\{6,7,\ldots,55\}$), which is commonly adopted as a non-informative prior in the absence of prior knowledge about the changepoint process. 
Notably, in both approaches, the changepoints are \textit{not} generated from the proposed multinomial probit model \eqref{eq:cpmodel2}-\eqref{eq:cpmodel1}. Instead, we intentionally introduce model misspecification to assess the robustness of our method to deviations from the assumed changepoint prior. For each parameter setting, we generate 100 independent simulation replicates.

We evaluate the changepoint model performance through estimation accuracy rather than detection accuracy. For spherical data, aggregating binary detection errors can be difficult to evaluate, as it corresponds to misclassified surface area rather than conventional false positive or false negative rates. Estimation error, by contrast, provides a natural and spatially coherent summary of performance and implicitly reflects detection accuracy, since incorrectly classified locations contribute directly to the overall error. Let $\hat{\boldsymbol{\tau}}$ denote the posterior mean of the changepoint process. To quantify estimation accuracy on the sphere, we take $\hat{\boldsymbol{\tau}}$ as the estimated changepoint and consider its $L^2$ distance from the true changepoint process, which can be viewed as a spherical analogue of root mean squared error (RMSE). Given that the surface area of a unit sphere is $4\pi$, we define the generalized RMSE as 
\begin{align}\label{eq:rmse}
   \text{g-RMSE}(\hat{\boldsymbol{\tau}}) = \frac{1}{4\pi}\lVert \boldsymbol{\tau}-\hat{\boldsymbol{\tau}} \rVert_{L^2(\mathbb{S}^2)} = \sqrt{\frac{1}{4\pi}\sum_{i=1}^N w(\bs_i)(\tau(\bs_i)-\hat{\tau}(\bs_i))^2}.
\end{align}

\subsection{Effect of spatial correlation on changepoint estimation}\label{sec:sim1}
To assess the benefit of explicitly modeling spatial dependence in changepoints, we compare the proposed multinomial probit model (MPM) in \eqref{eq:cpmodel2}-\eqref{eq:cpmodel1} with a baseline model that assumes no spatial dependence in the changepoints, hereafter denoted IND. Under IND, the prior changepoint probabilities must be spatially constant and sum to one over $k=1,\ldots,M$. Accordingly, we adopt a non-informative discrete uniform prior $\pi_k(\bs)=1/M$, which yields posterior estimates that coincide with the maximum likelihood estimator.

Figure~\ref{fig:sim1_rmse}(a) summarizes estimation accuracy under varying mean shift magnitudes and changepoint correlation strengths for the changepoints $\boldsymbol{\tau}_1$ generated via \eqref{eq:tau_method1}. Across all settings, MPM consistently achieves lower g-RMSE than IND. This performance gap increases with stronger spatial correlation, demonstrating that MPM effectively captures and exploits spatial dependence to improve changepoint estimation. Moreover, the relative advantage of MPM becomes more pronounced as the mean-shift signal weakens, indicating that borrowing strength across space is particularly beneficial in low signal-to-noise scenarios.

\begin{figure}[!ht]
\centering
\includegraphics[width=0.9\textwidth]{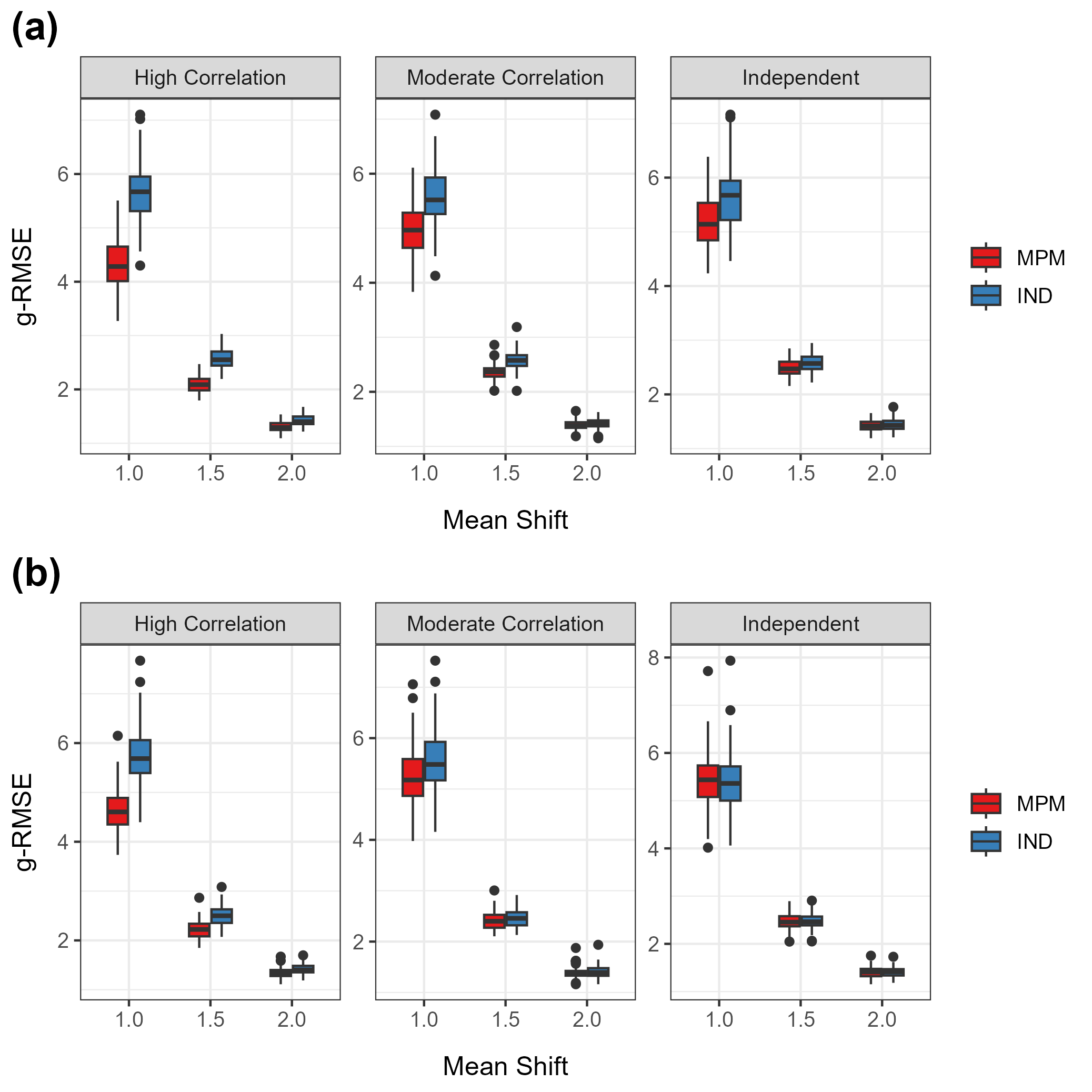}
\caption{\label{fig:sim1_rmse} Boxplots of g-RMSE produced by MPM and IND under different mean shift values (1, 1.5 and 2.0) and changepoint correlation strength for (a) changepoints $\boldsymbol{\tau}_1$ generated by \eqref{eq:tau_method1} and (b) changepoints $\boldsymbol{\tau}_2$ generated by \eqref{eq:tau_method2}. The x-axis displays the mean shift signal, and the y-axis displays g-RMSE. } 
\end{figure}

Surprisingly, MPM also outperforms IND even when spatial correlation is negligible ($\kappa = 100$), as observed in the “Independent” case in Figure~\ref{fig:sim1_rmse}(a). This behavior reflects differences not only in how the two models encode spatial dependence, but also in how they treat the marginal distribution of the changepoint. Specifically, IND fixes $\pi_k(\bs) = 1/M$ for all $k$, which is misspecified for the data-generating mechanism underlying $\boldsymbol{\tau}_1$. In contrast, MPM estimates the marginal changepoint probabilities from the data through the latent process $\mu_Z(\bs)$, providing additional flexibility even in the absence of spatial correlation. This interpretation is corroborated by results based on changepoints $\boldsymbol{\tau}_2$ generated via \eqref{eq:tau_method2}, which yields approximately uniform marginal distribution for $\pi_k(\bs)$ by construction. Indeed, under the “Independent” setting in Figure~\ref{fig:sim1_rmse}(b), the performance of MPM and IND is nearly identical, with IND slightly favored due to its simpler structure and exact alignment with the data-generating process. When spatial correlation is present, MPM again yields lower g-RMSE, although the improvement is more modest than in the $\boldsymbol{\tau}_1$ setting. Taken together, these results demonstrate that MPM is robust to both correlation and marginal distribution misspecification. While IND performs well when its assumptions are exactly satisfied, MPM maintains competitive performance in that regime and offers substantial gains when either spatial dependence or marginal structure deviates from uniformity. This robustness is particularly valuable in applications where prior information about the changepoint distribution is limited or unreliable.

\subsection{Effect of truncation error on changepoint estimation}\label{sec:sim2}
We now examine how truncating the spectral representations affects changepoint estimation. In the proposed framework, two latent processes are approximated by finite expansions at degree $L$: the spatial field $\mu_Z$ by $\mu_Z^L$ in \eqref{eq:kl_exp}, and the spatiotemporal process $U$ by $U^L$ in \eqref{eq:kl_U}. These truncations introduce distinct sources of approximation error.

Since $\mu_Z$ enters the changepoint model \eqref{eq:cpmodel2}-\eqref{eq:cpmodel1} directly through the multinomial probit construction, the impact of truncation in $\mu_Z$ can be analyzed theoretically using probabilistic bounds, as established in Section~\ref{sec:sh} and Appendix~\ref{app:appendix_trunc}. In contrast, truncation of the spatiotemporal noise process $U$ affects changepoint inference only indirectly through the likelihood. As a result, its effect is data-dependent and there is no global, model-level characterization analogous to that available for $\mu_Z$. We therefore rely on numerical experiments to assess the impact of truncating $U(\bs,t)$.

By the spherical harmonics addition theorem, the marginal spatial covariance of $U(\bs,t)$ is given by
\begin{align}\label{eq:truecov}
    C_s({\bf s},{\bf s'}; \kappa,\nu,\sigma^2_U) &= \sigma^2_U\sum_{l=0}^\infty \sum_{m=-l}^l (\kappa^2 + l(l+1))^{-(\nu+1)} \psi_{lm}({\bf s})\psi_{lm}({\bf s'}) \nonumber \\
    & = \frac{\sigma^2_U}{4\pi}\sum_{l=0}^\infty (2l + 1)(\kappa^2 + l(l+1))^{-(\nu+1)} P_l({\bf s} \cdot {\bf s'}).
\end{align}
Although this series does not have a closed-form expression, it is convergent \citep{lang2015isotropic} and can be evaluated numerically. This induces the innovation covariance matrix $\bSigma_U$ in the temporal innovation process  $\mathbf{U}_t - \xi \mathbf{U}_{t-1} \sim \mathcal{N}(0,\bSigma_U)$ and thus determines the prior distribution of $U(\bs,t).$

We define the truncated marginal spatial covariance as
\begin{align}\label{eq:trunccov}
  C^L_s(\bs,\bs';\kappa^L,\nu^L,\sigma^2_{U^L}) = \frac{\sigma^2_{U^L}}{4\pi}\sum_{l=0}^L (2l + 1)({\kappa^L}^2 + l(l+1))^{-(\nu^L+1)} P_l({\bf s} \cdot {\bf s'}),   
\end{align}
where the parameters $\kappa^L,\nu^L,\sigma^2_{U^L}$ are distinguished from $\kappa,\nu,\sigma^2_U$ to emphasize that they characterize the truncated process $U^L$ as opposed to the untruncated process $U$.

Let $\hat{\boldsymbol{\tau}}$ denote the posterior mean of the changepoint when $\bSigma_U$ is computed using the full covariance $C_s$, and $\hat{\boldsymbol{\tau}}^L$ the posterior mean using the spectral approach as outlined in Section \ref{sec:st}, which is mathematically equivalent to computing $\bSigma_U$ using $C_s^L.$ To ensure that any difference between $\hat{\boldsymbol{\tau}}$ and $\hat{\boldsymbol{\tau}}^L$ arises solely from truncation of the spatiotemporal process $U$, we use the simulation setting with changepoints $\boldsymbol{\tau}_2$ and $\kappa=100$ from Section~\ref{sec:simdat} and fit the data using IND model from Section~\ref{sec:sim1} such that the changepoint probabilities $\mathbb{P}(\tau^L(\bs) = k)$ is constant across $L$. To quantify the loss in changepoint estimation accuracy due to truncation of $U$, we examine the difference between g-RMSE$(\hat{\boldsymbol{\tau}}^L)$ and g-RMSE$(\hat{\boldsymbol{\tau}})$. Direct computation of $\hat{\boldsymbol{\tau}},$ however, requires storing the full $64800 \times 64800 $ covariance matrix and is infeasible due to memory and computational constraints. Instead, we approximate $\hat{\boldsymbol{\tau}}$ by studying the behavior of $\hat{\boldsymbol{\tau}}^L$ for increasing values of $L$ and extrapolating to the limit $L \to \infty$. 

To ensure $\hat{\boldsymbol{\tau}}^L \to \hat{\boldsymbol{\tau}}$ as $L \to \infty,$ we construct coupled MCMC chains across truncation levels by first initializing the chains at the same parameter values for every $L$. We then generate paired samples across $L$ using a common source of randomness in the following way: Let $F_{\vartheta,L}(\cdot;\varphi)$ denote the CDF of the full conditional for parameter $\vartheta$ under truncation level $L$, conditional on the remaining parameters $\varphi$. In principle, the $i^\text{th}$ draw from the full conditional can be written as 
\[\vartheta^L_i = F_{\vartheta,L}^{-1}(u_i;\varphi^L_{i-1}),\]
where the same $u_i \sim \text{Unif}(0,1)$ is used across $L.$ This construction couples the Markov chains across all truncation levels, allowing us to isolate the truncation error from Monte Carlo variability.

Let $F_{\vartheta}(\cdot;\varphi)$ be the full conditional CDF under the untruncated model, and define $\vartheta_i := F^{-1}_{\vartheta}(u_i;\varphi_{i-1}).$ For all parameters $\vartheta \neq \mathbf{U}$, the functional form of the full conditional distribution is identical under truncation, and any difference between $F_{\vartheta}$ and $F_{\vartheta,L}$ arises solely through conditioning on $U^L$ versus $U$. For $U$, the full conditional distribution differs in two distinct ways: (i) a structural difference induced by approximating the spatial covariance kernel $C_s$ with its truncated version $C_s^L$, and (ii) replacement of the conditioning parameters $(\xi, \sigma_\epsilon^2, \mu_1, \mu_2, \boldsymbol{\tau}, \kappa, \nu, \sigma_U^2)$ by their truncated counterparts.
The structural difference in (i) can be controlled via the decomposition
\begin{align*}
    \sup_{\bs,\bs'} |C_s(\bs,\bs';\varphi) - C^L_s(\bs,\bs';\varphi^L)|  &\leq   \sup_{\bs,\bs'} |C_s(\bs,\bs';\varphi) - C^L_s(\bs,\bs';\varphi)| \\
   &\hspace{2mm} +  \sup_{\bs,\bs'} |C^L_s(\bs,\bs';\varphi) - C^L_s(\bs,\bs';\varphi^L)|. 
\end{align*}
The first term converges to zero as $L \to \infty$ by convergence of the truncated spherical harmonic expansion, and the second term vanishes provided the associated conditioning parameters converge. Consequently, the convergence of $(\xi^L, \sigma^L_\epsilon, \mu_1^L, \mu_2^L, \boldsymbol{\tau}^L, \kappa^L, \nu^L, \sigma_{U^L})$ to their untruncated counterparts in the iteration $i-1$ guarantees the convergence of posterior samples $\mathbf{U}^L_i$ to $\mathbf{U}_i$. For $i = 1$, this holds trivially since all parameters are initialized identically across truncation levels. Given $\mathbf{U}_1^L \to \mathbf{U}_1$, the continuity of the corresponding full conditional CDFs implies $\vartheta_1^L \to \vartheta_1$ for all $\vartheta\neq U$. The same argument applies inductively to subsequent iterations, establishing convergence of the coupled posterior samples across truncation levels.

In practice, we realize the coupling by seeding the random number generator with a fixed integer before drawing from each pair of full conditionals. Because modern pseudorandom generators are deterministic maps of the seed, this produces the same underlying random variables for both the full and truncated conditionals, thereby achieving the same coupling as the theoretical construction above.

Following the outlined sampling procedure, $\boldsymbol{\tau}^L$ converges to $\boldsymbol{\tau}$ as $L\to\infty$ at each MCMC iteration, implying convergence of $\text{g-RMSE}(\hat{\boldsymbol{\tau}}^L)$ to $\text{g-RMSE}(\hat{\boldsymbol{\tau}})$.  Since the truncation error decreases with larger $L$, we expect g-RMSE$(\hat{\boldsymbol{\tau}}^L)$ to decrease as $L$ increases and eventually stabilize at g-RMSE$(\hat{\boldsymbol{\tau}}).$ We model this behavior using an exponential decay function \citep[e.g.,][]{aston2012radioactive, liu2021performance},
\begin{equation}\label{eq:exp}
\text{g-RMSE}(\hat{\boldsymbol{\tau}}^L) = a \exp(-bL) + \text{g-RMSE}(\hat{\boldsymbol{\tau}}).
\end{equation}
Given the maximum admissible truncation degree $L_{\max}=K/2-1=89$, we consider $L\in\{9,19,49,89\}$. Figure~\ref{fig:sim2} displays the fitted $\text{g-RMSE}(\hat{\boldsymbol{\tau}}^L)$ as a function of $L$ for different mean shift magnitudes. As expected, both the initial error level $a$ and the asymptotic value $\text{g-RMSE}(\hat{\boldsymbol{\tau}})$ increase as the signal weakens. The estimated decay rates are $b=0.1088, 0.1203,$ and $0.1245$ for mean shifts $1, 1.5,$ and $2$, respectively, indicating that truncation has a larger impact when the signal-to-noise ratio is low. Nevertheless, for all settings, both the fitted and observed values of $\text{g-RMSE}(\hat{\boldsymbol{\tau}}^{89})$ lie very close to the estimated asymptotic error, demonstrating that truncation-induced loss in changepoint estimation accuracy is negligible at the grid resolution considered here ($K=180$).

\begin{figure}[!ht]
\centering
\includegraphics[width=0.9\textwidth]{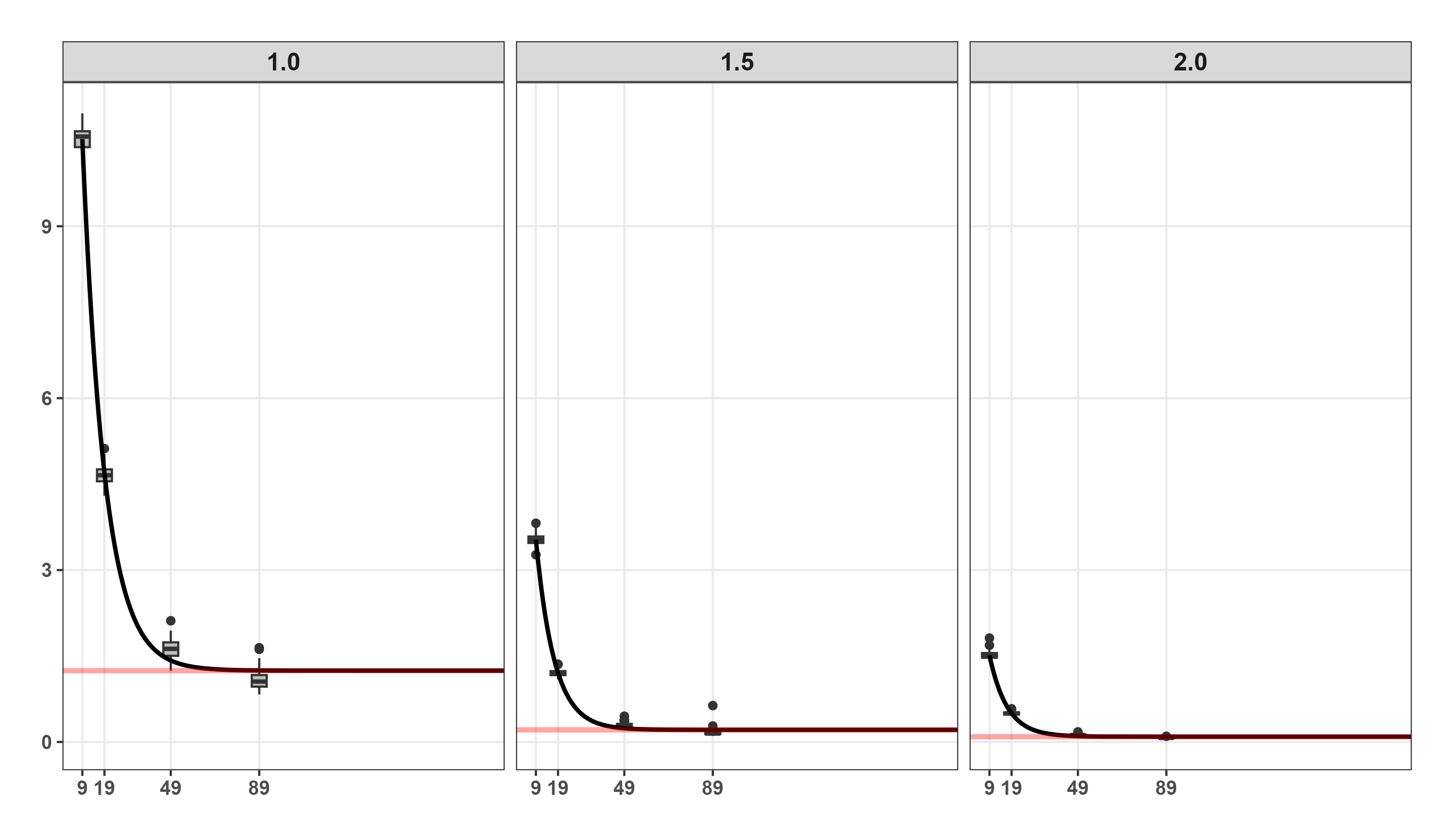}
\caption{\label{fig:sim2} Fitted g-RMSE$(\hat{\boldsymbol{\tau}}^L)$ as a function of $L$ using \eqref{eq:exp} for different mean shift values $\{1, 1.5, 2\}$. The boxplots represent the observed g-RMSE$(\hat{\boldsymbol{\tau}}^L)$ across 100 simulations. The red line represents estimated g-RMSE$(\hat{\boldsymbol{\tau}}).$ } 
\end{figure}

\subsection{Computational efficiency}
We now assess the computational gains afforded by two key components of the proposed methodology:
(i) replacing the full spatial covariance matrix $\bSigma_U$
 with a truncated spherical harmonic representation, and
(ii) replacing Metropolis-Hastings sampling for the changepoints with a Gibbs sampler enabled by the multinomial probit formulation (Section~\ref{sec:cpmodel}).

To provide a controlled comparison, we first describe the implementation of baseline method that based on full covariance matrix $\bSigma_U$ and Metropolis-Hastings (MH) sampling. Since computing $\bSigma_U$ directly using \eqref{eq:truecov} for large $N$ is infeasible for most computing systems due to memory and runtime constraints, we restrict our comparison to settings with modest $N$. It is important to note, however, that the relative computational advantage of the spherical harmonics approach is expected to become even more pronounced for large $N$ as the cost of evaluating and factorizing $N$ grows superlinearly. For Metropolis-Hastings sampling, doing a single ``block" update of the vector $\boldsymbol{\tau}$ is impractical due to difficulty in designing a multivariate proposal with a reasonable acceptance rate, causing the chain to move extremely slowly through the state space and leading to poor mixing. Therefore, we adopt a component-wise MH scheme, sequentially updating $\tau(\bs_i)$ for $i = 1,\ldots,N$. 

\begin{table}[h!]
\centering
\begin{tabular}{lccccc}
\hline\hline
$N$ & $L$ & \# of basis & SH \eqref{eq:trunccov} + MPM & $\bSigma_U$ \eqref{eq:truecov} + MPM & $\bSigma_U$ \eqref{eq:truecov} + MH \\\hline
$800$ & $9$ & 100 & 0.036 s & 11.824 s & 674.702 s\\
$3,200$ & $19$ & 400 & 0.119 s & 149.442 s & 14721.650 s\\
$20,000$ & $49$ & 2,500 & 0.809 s & -- & -- \\
$64,800$ & $89$ & 8,100 & 2.946 s & -- & -- \\
\hline
\end{tabular}
\caption{\label{tab:computation} Average computation speed (time/iteration) over 100 iterations for (1) spherical harmonics (SH) representation with multinomial probit model (MPM) for changepoints, (2) full covariance matrix with MPM, and (3) full covariance matrix with MH update.}
\end{table}

Table \ref{tab:computation}  reports the average runtime per iteration for three approaches: (1) spherical harmonics (SH) representation with multinomial probit model (MPM) and Gibbs sampling, (2) full covariance $\bSigma$ with MPM, and
(3) full covariance $\bSigma_U$ with component-wise MH sampling for changepoints. The results demonstrate that replacing the full covariance with a spherical harmonic representation yields dramatic reductions in computation time with minimal loss in changepoint estimation accuracy for sufficiently large truncation level $L$. (Sections~\ref{sec:sh} and~\ref{sec:sim2}). While the full covariance approach provides a direct representation of spatial dependence, it becomes computationally prohibitive even at moderate $N$, both in memory usage and runtime. In contrast, the spectral approach enables practical analysis of large-scale spatial data sets that would otherwise be infeasible. 

Beyond the spectral representation, the Gibbs sampling scheme enabled by the multinomial probit model provides an additional and substantial efficiency gain. Component-wise MH requires sequential updates with acceptance–rejection steps at every spatial location, whereas the Gibbs sampler updates all changepoints jointly in a single step. When combined with the spherical harmonic representation, this yields per-iteration runtimes on the order of seconds even for large $N$, compared to minutes or hours for MH-based alternatives. Together, these results show that the two methodological contributions -- fast spherical harmonic transform via spectral representation and Gibbs sampling via MPM -- act synergistically to enable efficient inference for large-scale spatial changepoint problems.

\section{Data Application}\label{sec:data}
We apply our method to 60 months of global stratospheric AOD data spanning January 1989 to December 1993, obtained from the same MERRA-2 reanalysis source described in Section~\ref{sec:simdat}. This time window was selected because the AOD record contains well-documented structural changes beginning in June 1991 following the eruption of Mount Pinatubo, the largest volcanic eruption in recent history. The eruption injected nearly 20 megatons of sulfur dioxide into the stratosphere, producing a global aerosol cloud that encircled the Earth within weeks and resulted in the most substantial perturbation to the stratospheric aerosol layer since the eruption of Krakatau in 1883 \citep{self1993}.

Prior to changepoint analysis, we preprocess the data to remove seasonality at each location using STL decomposition and apply a logarithmic transformation. We then estimate a linear temporal trend using observations prior to June 1991, the month of the eruption. When the estimated trend is statistically significant at the 0.05 level, it is removed from both the pre- and post-eruption periods to ensure that the pre-changepoint mean is approximately constant across time.

Previous work by \cite{shi2025tracing} analyzed spatial changepoints for the same dataset using a downsampled $16 \times 48$ grid, necessitated by computational constraints. Their analysis identified changepoints occurring between June and September 1991, with the earliest changes concentrated between approximately $3.5^\circ$S and $34^\circ$N latitude. Following their work, we model the pre- and post-changepoint mean processes as
\begin{align*}
    \mu_1(\bs,t) &= \beta_0\\
    \mu_2(\bs,t) &= \beta_1(\bs) + \beta_2(\bs)(t-\tau(\bs)),
\end{align*}
where $\beta_1(\bs)$ and $\beta_2(\bs)$ capture spatially varying changes in level and trend following the changepoint. For computational simplicity, we assume $\boldsymbol{\beta}_i \sim \mathcal{N}(\beta_i^{F},\sigma^2_{\beta_i}\mathbf{I})$ for $i=1,2$. For identifiability, the variance parameters $\sigma^2_{\beta_i}$ are treated as ridge penalties and selected using the Watanabe–Akaike Information Criterion (WAIC) \citep{shi2025tracing}.

At each spatial location, the changepoint is estimated using the posterior mode of $\tau(\bs)$, with $\tau(\bs)=M$ interpreted as no changepoint. Figure~\ref{fig:aod}(a) displays the resulting heatmap of detected changepoints, with locations for which no changepoint is detected shown in white. Our method detects changepoints at nearly all locations, with dates ranging from May through October 1991. Under our model specification \eqref{eq:model}, a changepoint at May 1991 corresponds to a mean shift being observed in June 1991, which is consistent with the eruption date. The estimated changepoints show a pattern driven more by latitude than longitude, with the two earliest changepoints May 1991 and June 1991 concentrated between latitudes $35^\circ N$ through $14^\circ S$. This pattern is consistent with existing literature, which reports that the Pinatubo aerosol layer circled the Earth in 21 days and had spread to latitudes around $30^\circ N$ and $10^\circ S$ in the same period \citep{self1993}. The results are also consistent with the findings of \cite{shi2025tracing}, while providing finer spatial detail due to the higher resolution of the present analysis. Compared to earlier work, our high-resolution model reveals localized heterogeneity that is not visible on coarser grids. In particular, a small subset of locations exhibits no detectable changepoint -- a feature that was previously obscured by the coarser grid used in \cite{shi2025tracing}. Figure~\ref{fig:aod}(b) shows representative time series from these locations. Although these series display modest increases around the eruption period, the magnitude of change is small relative to the background variability, making it reasonable not to classify these locations as having a significant mean shift. Overall, the results are consistent with established scientific understanding of the Pinatubo eruption, but our high-resolution changepoint analysis can uncover subtle, localized deviations that would be missed at coarser scales.    

\begin{figure}[!ht]
\centering
\includegraphics[width=0.95\textwidth]{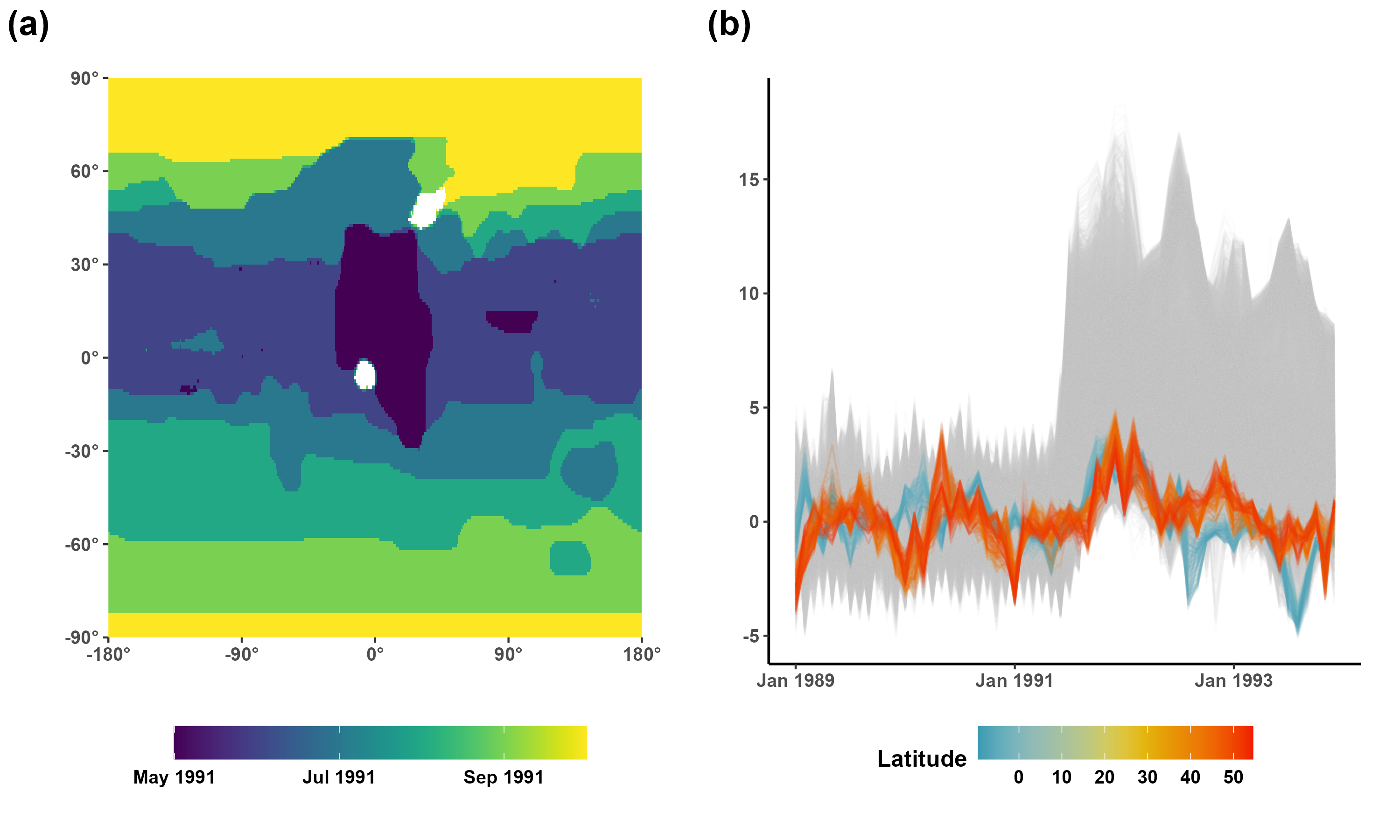}
\caption{\label{fig:aod} (a) Heatmap of detected changepoints in stratospheric AOD data. White color indicates that no changepoints were detected. (b) Time series of locations where no changepoints were detected. Grey lines represent time series of locations with changepoints detected.} 
\end{figure}

\section{Discussion}\label{sec:disc}
We propose a scalable Bayesian framework for spatial changepoint detection on the sphere, designed to address the computational challenges posed by high-resolution global data. The key methodological contribution is the replacement of Metropolis–Hastings updates for spatially varying changepoints with a fully Gibbs-sampled procedure enabled by a multinomial probit formulation. This modeling choice preserves spatial dependence while substantially improving computational efficiency and stability in high dimensions. In parallel, we develop a spectral inference framework based on fast spherical harmonic transformations that enables efficient Bayesian modeling of large spatial and spatiotemporal processes without explicit construction of dense covariance matrices.

The proposed methodology is robust to varying strengths and forms of spatial correlation and achieves substantial computational gains while incurring only minimal loss in estimation accuracy.  The application to global stratospheric aerosol optical depth data demonstrates that the proposed methodology can recover scientifically meaningful changepoint patterns at native spatial resolution, revealing localized structure that is obscured by spatial aggregation. While the methodological development is motivated by changepoint detection, the underlying concepts and computational strategies are broadly applicable to large-scale spherical data analysis.

Several directions for future research are worth exploring. One promising avenue is the incorporation of modern generative models as flexible priors for spatial and spatiotemporal latent processes. Recent developments in score-based diffusion models and variational autoencoder–type constructions provide principled mechanisms for approximating high-dimensional Gaussian and non-Gaussian distributions while retaining compatibility with Bayesian inference. In particular, generative priors trained to approximate Gaussian processes offer a potential route to scalable surrogates for complex covariance structures that are otherwise computationally prohibitive \citep{semenova2022priorvae, cardoso2025predictive}. Integrating such priors into hierarchical changepoint models could enable richer, data-adaptive representations of spatial dependence while preserving efficient posterior sampling. Establishing theoretical guarantees for these approximations and understanding their implications for posterior uncertainty quantification remain important open problems.

\section*{Acknowledgments}
We thank Lyndsay Shand for assistance with data retrieval and for sharing domain expertise relevant to the atmospheric application considered in this paper.

% \section*{Data Availability Statement}
% The data used in this study are publicly available from NASA’s Modern-Era Retrospective Analysis for Research and Applications, Version 2 (MERRA-2). The aerosol optical depth variable (TOTEXTTAU) can be accessed through the Global Modeling and Assimilation Office \citep{merra2_aod}. Processed data and analysis code supporting the findings of this study are available from the corresponding author upon reasonable request.

\bibliographystyle{apalike} % Style BST file
\bibliography{sample.bib}       % Bibliography file (usually '*.bib')

\clearpage
\phantomsection 
%\addcontentsline{toc}{section}{Appendix} % Add an entry to the ToC
\section*{Appendix}
\pdfbookmark[0]{Appendix}{app_heading} % Add a corresponding PDF bookmark
\appendix % Changes numbering scheme from chapters/sections to A, B, C
% \phantomsection
% \pdfbookmark[0]{Appendix}{appendix}
% \addcontentsline{toc}{section}{Appendix}

\section{SPDE representation and spectral properties of Mat\'ern Fields on the sphere}\label{app:appendix_a}

Let $\mathcal{D}$ be a compact manifold and $\phi \in L^2(\mathcal{D}).$ The eigenfunctions $\{E_k\}_{k=1}^\infty$ of negative Laplacian $-\Delta E_k = \lambda_k E_k$ forms an orthonormal basis for $L^2(\mathcal{D})$, and $\phi$ can be expressed as 
\begin{align*}
    \phi &= \sum_{k=1}^\infty \hat{\phi}(k)E_k,
\end{align*}
where $\hat{\phi}(k) = \langle \phi,E_k \rangle_{L^2(\mathcal{D})}.$
\begin{defn} 
\label{defn:fractional}
Let $\mathcal{L}$ be a non-negative, self-adjoint operator on a Hilbert space $\mathcal{H}$ with eigenvalues $\{\lambda_k\}_{k=0}^\infty$ and eigenfunctions $\{E_k\}_{k=0}^\infty$, where $ \mathcal{L} E_k = \lambda_k E_k$ and $\{E_k\}_{k=0}^\infty$ forms an orthonormal basis of $\mathcal{H}$. Then, for any $\phi \in \mathcal{H}$, the fractional power $\mathcal{L}^\alpha$ for $\alpha > 0$ is defined by:
\[
\mathcal{L}^\alpha \phi = \sum_{k=1}^\infty \lambda_k^\alpha \langle \phi, E_k \rangle_\mathcal{H} E_k,
\]
provided the series converges in $\mathcal{H}$.
\end{defn}

\begin{lem}
\label{lem:spec_identity}
Let $\lambda_k, k = 1,2,\ldots$ be the eigenvalues of $-\Delta$ on $\mathcal{D}.$ Let $\phi \in L^2(\mathcal{D}).$ Then,
\[\reallywidehat{(\kappa^2-\Delta)^{\alpha/2}\phi}(k) = (\kappa^2 + \lambda_k)^{\alpha/2}\hat{\phi}(k).\]
\end{lem}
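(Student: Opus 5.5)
The identity follows directly from Definition~\ref{defn:fractional} once we identify the spectral data of the operator $\mathcal{L} := \kappa^2 - \Delta$ on $L^2(\mathcal{D})$. The plan is to verify that $\mathcal{L}$ satisfies the hypotheses of the definition, compute its eigenvalues, and then read off the coefficients of $\mathcal{L}^{\alpha/2}\phi$ using orthonormality of $\{E_k\}$.

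First, since $-\Delta E_k = \lambda_k E_k$ with $\{E_k\}$ an orthonormal basis of $L^2(\mathcal{D})$, we have
\[
\mathcal{L}E_k = \kappa^2 E_k - \Delta E_k = (\kappa^2 + \lambda_k)E_k .
\]
So $\mathcal{L}$ shares the eigenfunctions $E_k$, with eigenvalues $\kappa^2+\lambda_k$. Because $-\Delta$ is self-adjoint and non-negative on a compact manifold ($\lambda_k \ge 0$), the operator $\mathcal{L}$ is self-adjoint and non-negative, in fact strictly positive with eigenvalues at least $\kappa^2>0$. Hence Definition~\ref{defn:fractional} applies with exponent $\alpha/2$ and gives
\[
(\kappa^2-\Delta)^{\alpha/2}\phi = \sum_{j} (\kappa^2+\lambda_j)^{\alpha/2}\hat\phi(j)E_j ,
\]
provided the series converges in $L^2(\mathcal{D})$.

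Second, we take the inner product with $E_k$. Convergence of the series in $L^2$ and continuity of the inner product allow us to interchange the sum and the inner product, and orthonormality $\langle E_j,E_k\rangle = \delta_{jk}$ kills all terms except $j=k$. This yields $\reallywidehat{(\kappa^2-\Delta)^{\alpha/2}\phi}(k) = (\kappa^2+\lambda_k)^{\alpha/2}\hat\phi(k)$, as claimed.

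The only real subtlety is convergence of the defining series. By Parseval it converges in $L^2$ exactly when $\sum_k (\kappa^2+\lambda_k)^{\alpha}|\hat\phi(k)|^2<\infty$, that is, when $\phi$ lies in the domain (a Sobolev-type space $H^{\alpha}$) of the fractional operator. This is implicit in the statement through the "provided the series converges" clause of Definition~\ref{defn:fractional}, and I would state it explicitly as a standing assumption. For its intended use, with $\phi$ a Gaussian white-noise or Mat\'ern field, the identity should then be read coefficientwise or in a distributional sense, which gives the same formula. I would also take care to keep the indexing of $\{E_k\}$ consistent, since the paper starts it at $0$ in some places and at $1$ in others.
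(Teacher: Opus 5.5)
Your proposal is correct and follows essentially the same route as the paper: show that $\kappa^2-\Delta$ shares the eigenfunctions $E_k$ with eigenvalues $\kappa^2+\lambda_k$ and is self-adjoint and non-negative, then invoke Definition~\ref{defn:fractional}. Two points the paper leaves implicit, you make precise: you read off the $k$-th coefficient explicitly via orthonormality, and you note that the identity needs the defining series to converge, i.e., $\phi$ must lie in the domain of the fractional operator.
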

\begin{proof}
Let $E_k, k=1,2,\ldots$ be the eigenfunctions of $-\Delta$ on $L^2({D})$ with eigenvalues $\lambda_k.$ Since
\begin{align*}
-\Delta\phi &= \sum_{k=1}^\infty \hat{\phi}(k)(-\Delta E_k) = \sum_{k=1}^\infty \lambda_k\hat{\phi}(k)E_k\\
\intertext{and}
\kappa^2\phi(s) &= \kappa^2\sum_{k=1}^\infty \hat{\phi}(k)E_k(s),
\end{align*}
the operator $(\kappa^2 - \Delta)$ is non-negative definite and self-adjoint with spectral representation given by
\[(\kappa^2-\Delta)\phi(s) = \sum_{k=1}^\infty (\kappa^2 + \lambda_k)\hat{\phi}(k)E_k(s).\]
The result follows from Definition \ref{defn:fractional}.
\end{proof}

\begin{defn} 
\label{defn:gwn}
A random field $\mathcal{W}$ is said to be a Gaussian white noise on $\mathcal{D}$ if, for any set of finite test functions $\{\phi_k \in L^2(\mathcal{D})\}_{k=1}^n,$ $(\langle \phi_1,\mathcal{W}\rangle_{L^2(\mathcal{D})}, \ldots, \langle \phi_n,\mathcal{W}\rangle_{L^2(\mathcal{D})}$ is multivariate Gaussian with
\begin{align*}
    \mathbb{E}[\langle \phi_k,\mathcal{W}\rangle_{L^2(\mathcal{D})}] &= 0\\
    \text{Cov}(\langle \phi_i,\mathcal{W}\rangle_{L^2(\mathcal{D})},\langle \phi_j,\mathcal{W}\rangle_{L^2(\mathcal{D})}) &= \langle \phi_i,\phi_j\rangle_{L^2(\mathcal{D})}. 
\end{align*}
\end{defn}

\begin{thm}
    The solution to \eqref{eq:SPDE} on $\mathbb{S}^2$ has a spectral representation given by \eqref{eq:kl_exp}, where the spectral density is given by
    \[S_l = \kappa^2+l(l+1))^{-(\nu+1)} \]
\end{thm}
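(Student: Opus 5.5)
We work in the spherical harmonic basis, where the operator $(\kappa^2-\Delta)^{(\nu+1)/2}$ acts diagonally. On $\mathbb{S}^2$ we have $d=2$, so the exponent in \eqref{eq:SPDE} is $(\nu+d/2)/2=(\nu+1)/2$. The eigenfunctions of $-\Delta$ are the spherical harmonics $\psi_{lm}$, with eigenvalues $\lambda_{lm}=l(l+1)$, and they form an orthonormal basis of $L^2(\mathbb{S}^2)$.

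\emph{Step 1: project the SPDE onto each basis function.} Take the $L^2(\mathbb{S}^2)$ inner product of both sides of \eqref{eq:SPDE} with $\psi_{lm}$, interpreting the equation weakly with $\psi_{lm}$ as test function. Lemma~\ref{lem:spec_identity} gives
\[
\tilde{\sigma}^{-1}\bigl(\kappa^2+l(l+1)\bigr)^{(\nu+1)/2}\,\hat{X}(l,m)=\langle \psi_{lm},\mathcal{W}\rangle_{L^2(\mathbb{S}^2)} .
\]
This uses the self-adjointness of $(\kappa^2-\Delta)^{(\nu+1)/2}$ to move the operator onto the test function. Solving for the coefficient,
\[
\alpha_{lm}:=\hat{X}(l,m)=\tilde{\sigma}\,\bigl(\kappa^2+l(l+1)\bigr)^{-(\nu+1)/2}\langle \psi_{lm},\mathcal{W}\rangle .
\]
Because $\kappa>0$, the factor $\kappa^2+l(l+1)$ never vanishes, including at $l=0$, so the inversion is legitimate for every $(l,m)$.

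\emph{Step 2: identify the joint law of the coefficients.} By Definition~\ref{defn:gwn}, any finite collection $\{\langle\psi_{lm},\mathcal{W}\rangle\}$ is jointly Gaussian with mean zero and covariance $\langle\psi_{lm},\psi_{l'm'}\rangle=\delta_{ll'}\delta_{mm'}$. Hence the $\alpha_{lm}$ are independent $\mathcal{N}\bigl(0,\tilde{\sigma}^2(\kappa^2+l(l+1))^{-(\nu+1)}\bigr)$ random variables. Absorbing the constant $\tilde\sigma^2$ into the scale, written $\sigma_Z^2$ in the main text, gives $S_l\propto(\kappa^2+l(l+1))^{-(\nu+1)}$. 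The variance depends only on $l$, which is consistent with isotropy.

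\emph{Step 3: show the series defines a genuine $L^2(\mathbb{S}^2)$-valued solution.} Define $X=\sum_{l,m}\alpha_{lm}\psi_{lm}$, which is exactly the form \eqref{eq:kl_exp} with $m_Z=0$. The key estimate is
\[
\mathbb{E}\lVert X\rVert^2_{L^2(\mathbb{S}^2)}=\sum_{l}(2l+1)S_l\asymp\sum_l l^{1-2(\nu+1)}=\sum_l l^{-1-2\nu}<\infty \quad\text{for } \nu>0 .
\]
So the series converges in $L^2(\Omega,L^2(\mathbb{S}^2))$. Applying $(\kappa^2-\Delta)^{(\nu+1)/2}$ termwise then recovers $\tilde\sigma\sum_{l,m}\langle\psi_{lm},\mathcal{W}\rangle\psi_{lm}$, which is $\tilde\sigma\mathcal{W}$ in the weak sense. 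Uniqueness also holds: any solution has its coefficients forced by Step 1.

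The main obstacle is making the meaning of "solution" precise. White noise is not an $L^2$ function, so the equation must be read weakly against test functions, or equivalently coefficientwise. The fractional operator of Definition~\ref{defn:fractional} is applied to $X$, and its convergence must be justified. Steps 1 and 3 handle this by working entirely in coefficient space, where every operation reduces to multiplication by scalars.
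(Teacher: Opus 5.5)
Your proposal is correct and follows the same route as the paper: project the SPDE onto the spherical harmonics via Lemma~\ref{lem:spec_identity}, use Definition~\ref{defn:gwn} to get i.i.d.\ standard normal white-noise coefficients, and invert using the eigenvalues $l(l+1)$ with $d=2$. Your Step 3 ($L^2$ convergence for $\nu>0$ and uniqueness) and your explicit tracking of $\tilde\sigma^2$ add rigor but do not change the route; note that the paper's proof drops $\tilde\sigma$, so the stated identity holds exactly only when $\tilde\sigma=1$.
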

\begin{proof}
From Definition \ref{defn:gwn}, we have $(\hat{\mathcal{W}}_1,\ldots,\hat{\mathcal{W}}_n) \overset{d}{=} \mathcal{N}(0,\mathbf{I}_n)$ for all $n \in \mathbb{N}.$ By Lemma~\ref{lem:spec_identity}, applying the spectral representation on both sides of $\eqref{eq:SPDE}$ gives 
\[(\kappa^2 + \lambda_k)^{(\nu+d/2)/2}\hat{X}(k) = \xi_k,\]
where $\xi_k \overset{\mathrm{iid}}{\sim} \mathcal{N}(0,1).$ Thus, 
\[\hat{X}_k \overset{\mathrm{ind}}{\sim} \mathcal{N}(0,(\kappa^2 + \lambda_k)^{-(\nu+d/2)}).\]
The result follows from noting that the spherical harmonics function $\psi_{lm}$ is the eigenfunction of negative Laplacian on $\mathbb{S}^2$ with eigenvalue $l(l+1).$

\end{proof}

\section{Probabilistic guarantees for truncated changepoint process}\label{app:appendix_trunc}
\begin{prop}\label{prop:conv}
    $\mu_Z^L(\bs) \to \mu_Z(\bs)$ as $L \to \infty$ almost surely (in $\mathbb{P})$ and pointwise (in $\bs$) for $\nu > 1/2,$ with 
    \[\mathbb{P}(\lvert \mu_Z^L(\bs) - \mu_Z(\bs)\rvert \geq \epsilon) \leq 2 - 2\Phi\left(\tfrac{\epsilon L^\nu}{  \sigma_Z
\sqrt{\frac{1}{\nu}+\frac{1}{L(2\nu+1)} }}\right). \]
\end{prop}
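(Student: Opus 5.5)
The plan is to write the truncation error at a fixed location $\bs$ as the Gaussian tail series
\[
R_L(\bs) := \mu_Z(\bs)-\mu_Z^L(\bs) = \sum_{l=L+1}^\infty \sum_{m=-l}^l \alpha_{lm}\psi_{lm}(\bs),
\]
compute its variance exactly, bound that variance by an explicit power of $L$, and then read off both the tail inequality and the almost-sure convergence from Gaussianity.

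\textbf{Step 1: $R_L(\bs)$ is Gaussian with explicit variance.} Since the $\alpha_{lm}$ are independent, centred Gaussians with variance $\sigma_Z^2(\kappa^2+l(l+1))^{-(\nu+1)}$, each partial sum of $R_L(\bs)$ is a centred Gaussian. I will show the partial sums have uniformly bounded variance. Then they converge in $L^2(\mathbb{P})$, and the limit is again centred Gaussian. To compute the variance, I use the addition theorem already invoked in \eqref{eq:truecov} at $\bs=\bs'$, namely $\sum_{m=-l}^l\psi_{lm}(\bs)^2=(2l+1)/(4\pi)$, using $P_l(1)=1$. This gives
\[
\operatorname{Var}R_L(\bs)=\frac{\sigma_Z^2}{4\pi}\sum_{l>L}(2l+1)(\kappa^2+l(l+1))^{-(\nu+1)} .
\]
This identification is the step where pointwise (rather than merely $L^2(\mathbb{S}^2)$) meaning of $\mu_Z(\bs)$ is fixed. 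I take $\mu_Z(\bs)$ to be the $L^2(\mathbb{P})$-limit of the partial sums at $\bs$, which is consistent with \eqref{eq:kl_exp}.

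\textbf{Step 2: power-law bound on the variance.} I bound $(\kappa^2+l(l+1))^{-(\nu+1)}\le l^{-2(\nu+1)}$ and split
\[
(2l+1)\,l^{-2\nu-2}=2l^{-2\nu-1}+l^{-2\nu-2}.
\]
Both summands are decreasing in $l$, so integral comparison gives
\[
\sum_{l>L}l^{-2\nu-1}\le\int_L^\infty x^{-2\nu-1}\,dx=\frac{L^{-2\nu}}{2\nu},
\qquad
\sum_{l>L}l^{-2\nu-2}\le\frac{L^{-2\nu-1}}{2\nu+1}.
\]
Combining, and dropping the harmless factor $1/(4\pi)\le1$, yields
\[
\operatorname{Var}R_L(\bs)\le \sigma_Z^2L^{-2\nu}\Big(\frac1\nu+\frac1{L(2\nu+1)}\Big)=:s_L^2 .
\]
The same bound, applied to the tails of partial sums, gives the uniform variance bound needed in Step 1.

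\textbf{Step 3: tail bound and almost-sure convergence.} For a centred Gaussian with standard deviation $\sigma\le s_L$,
\[
\mathbb{P}(|R_L(\bs)|\ge\epsilon)=2-2\Phi(\epsilon/\sigma)\le 2-2\Phi(\epsilon/s_L),
\]
which is exactly the stated inequality. For almost-sure convergence, I apply the Mills-ratio bound $2-2\Phi(x)\le e^{-x^2/2}$. This makes the probabilities at most $\exp\{-c\,\epsilon^2L^{2\nu}\}$ with $c=\nu/(2\sigma_Z^2(1+\nu))$ for $L\ge1$. These are summable in $L$ for any $\nu>0$, so in particular for $\nu>1/2$. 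Borel--Cantelli then gives $|R_L(\bs)|<\epsilon$ eventually, almost surely. Intersecting over rational $\epsilon>0$ yields $\mu_Z^L(\bs)\to\mu_Z(\bs)$ almost surely for each fixed $\bs$.

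I expect the main obstacle to be bookkeeping rather than depth. The points to check are: correctly justifying that the pointwise value $\mu_Z(\bs)$ is the $L^2(\mathbb{P})$-limit of the series; that the addition theorem applies with the paper's real orthonormal harmonics; and that the constants in Step 2 match the displayed bound, in particular the $1/(4\pi)$ factor and the $l=L$ versus $l=L+1$ endpoint in the integral comparison.
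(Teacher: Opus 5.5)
Your proof is correct, and its core matches the paper's. The paper also bounds the tail variance using $(\kappa^2+l(l+1))^{-(\nu+1)}\le l^{-2(\nu+1)}$ and the same integral comparison (after shifting $x=l-L$), and it also obtains the displayed inequality from the Gaussianity of $\mu_Z^L(\bs)-\mu_Z(\bs)$.

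The real difference is in the almost-sure part. The paper applies Chebyshev's inequality, which gives $\mathbb{P}(\lvert\mu_Z^L(\bs)-\mu_Z(\bs)\rvert>\epsilon)=O(\epsilon^{-2}L^{-2\nu})$. Summability of $L^{-2\nu}$ in Borel--Cantelli is the only place the hypothesis $\nu>1/2$ enters. You instead feed the Gaussian tail into Borel--Cantelli and get $\exp\{-c\,\epsilon^2L^{2\nu}\}$, which is summable for every $\nu>0$. Your route therefore shows that, for this proposition, the restriction $\nu>1/2$ is an artifact of the cruder inequality. Kolmogorov's convergence theorem for independent summands with summable variances would give the same conclusion.

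There are two smaller differences. You compute the variance exactly through the addition theorem and keep the factor $1/(4\pi)$, which the paper's displayed ``equality'' omits (harmlessly, since omitting it only enlarges the bound). You also use only the diagonal $\bs=\bs'$, whereas the paper bounds $\sup_{\bs,\bs'}\lvert C-C^L\rvert$, which is more than this proposition requires. Finally, you state explicitly that the pointwise value $\mu_Z(\bs)$ is the $L^2(\mathbb{P})$-limit of the partial sums, which the paper takes for granted.
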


\begin{proof}
Let $C(\bs,\bs')$ and $C^L(\bs,\bs')$ denote the covariance function of $\mu_Z$ and $\mu_Z^L,$ respectively. The series expansion of the covariance function is bounded uniformly, with
    \begin{align*}
    \sup_{\bs,\bs' \in \mathbb{S}^2}\lvert C(\bs,\bs') - C^L(\bs,\bs')\lvert &=\sup_{\mathbf{s},\mathbf{s'}\in\mathbb{S}^2} \sigma^2_Z\left \lvert \sum_{l=L+1}^\infty (2l + 1)S_l P_l({\bf s} \cdot {\bf s'}) \right \rvert \\
    &= \sigma^2_Z\sum_{l=L+1}^\infty (2l+1)(\kappa^2 + (l+1)l)^{-(\nu+1)}\\
    % &\leq \sigma^2_Z\sum_{l=L+1}^\infty (2l+1)(\kappa^2 + l^2)^{-(\nu+1)}\\
    &\leq \sigma^2_Z\sum_{l=L+1}^\infty (2l+1)l^{-2(\nu+1)}\\
    &=  \sigma^2_Z \sum_{x=1}^\infty (2(x+L)+1)(x+L)^{-2(\nu+1)}\\
    &\leq \sigma^2_Z\int_0^\infty (2(x+L)+1)(x+L)^{-2(\nu+1)}dx\\
    &= \sigma^2_Z\left(\tfrac{1}{\nu} + \tfrac{1}{L(2\nu+1)}\right)L^{-2\nu}.
\end{align*}
Setting $\bs = \bs',$ the above result and Markov inequality implies 
\begin{align*}
    \mathbb{P}(\lvert\mu_Z^L(\bs)-\mu_Z(\bs)\rvert > \epsilon) &\leq
    \frac{\sigma^2_Z}{\epsilon^2}\left(\tfrac{1}{\nu} + \tfrac{1}{2\nu+1}\right)L^{-2\nu}.
\end{align*}
The series $\sum_{L=1}^\infty L^{-2\nu}$ converges for $\nu > 1/2,$ which proves the almost sure convergence of $\mu_Z^L(\bs) \to \mu_Z(\bs)$ by the Borel-Cantelli lemma.

A tighter bound given in the proposition can be obtained by considering the Gaussian structure of $\mu_Z$ and noting that $\mu_Z^L(\bs) - \mu_Z(\bs)$ follows a normal distribution with mean $0$ and variance bounded by $\sigma^2_Z\left(\tfrac{1}{\nu} + \tfrac{1}{L(2\nu+1)}\right)L^{-2\nu}.$ 
\end{proof}

\begin{proof}[Proof of Theorem \ref{thm:cp_trunc}]
    Consider the case where $a=0.$ $\tau^L(\bs) = \tau(\bs)$ if and only if $Z^L(\bs)$ and $Z(\bs)$ fall in the same bin defined by $\gamma$'s, Then, 
    \begin{align*}
        & \mathbb{P}(\lvert \tau^L(\bs)-\tau(\bs)\rvert = 0)\\
         \geq & \mathbb{P}\left(\lvert \mu_Z^L(\bs)-\mu_Z(\bs)\rvert \leq \min_{m}\lvert \gamma_m - Z^L(\bs)\rvert \right)\\
        = & \sum_{k=1}^M \mathbb{P}\left(\lvert \mu_Z^L(\bs)-\mu_Z(\bs)\rvert \leq \min_{m}\lvert \gamma_m - Z^L(\bs)\rvert  \;\middle|\; \tau^L(\bs) = k\right)\mathbb{P}\left(\tau^L(\bs) = k\right)\\
        = & \sum_{k=1}^M \mathbb{P}\left(\lvert \mu_Z^L(\bs)-\mu_Z(\bs)\rvert \leq \Delta_{k,0}(Z^L(\bs))  \;\middle|\; \gamma_{k-1}\leq Z^L(\bs) < \gamma_k\right)\\
        &\hspace{1.cm}\times \mathbb{P}\left(\gamma_{k-1}\leq Z^L(\bs) < \gamma_k\right)\\
        = & \sum_{k=1}^M \int_z \mathbb{P}\left(\lvert \mu_Z^L(\bs)-\mu_Z(\bs)\rvert \leq \Delta_{k,0}(z)  \;\middle|\;Z^L(\bs) =z\right)f_{Z^L\mid Z^L\in (\gamma_{k-1},\gamma_k]}(z)dz\\
        &\hspace{1.cm}\times \mathbb{P}\left(\gamma_{k-1}\leq Z^L(\bs) < \gamma_k\right)\\
        = & \sum_{k=1}^M \mathbb{E}\left[ \mathbb{P}\left(\lvert \mu_Z^L(\bs)-\mu_Z(\bs)\rvert \leq \Delta_{k,0}(Z^L(\bs))  \;\middle|\; \gamma_{k-1}\leq Z^L(\bs) < \gamma_k\right)\right]\\
        &\hspace{1.cm}\times \mathbb{P}\left(\gamma_{k-1}\leq Z^L(\bs) < \gamma_k\right).
    \end{align*}
    where the last equality follows from independence of $\mu_Z^L(\bs)-\mu_Z(\bs)$ and $Z^L(\bs).$ The desired result follows from Proposition~\ref{prop:conv}. 
\end{proof}

\begin{cor}
Let $U_a$ denote the upper bound for $\mathbb{P}(\lvert \tau^L(\bs)-\tau(\bs) \rvert \leq a)$ given in Theorem~\ref{thm:cp_trunc}. Then, the expected $L^1$ error is bounded by
\begin{equation}
    \mathbb{E}[\lVert \tau^L - \tau \rVert_{L^1(\mathbb{S}^2)}] \leq 4\pi \left(M-1- \sum_{a=0}^{M-1}U_a \right).
\end{equation}
\end{cor}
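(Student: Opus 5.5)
The plan is to reduce the $L^1(\mathbb{S}^2)$ error to a pointwise expectation, and then write that expectation as a tail sum over the integer-valued discrepancy $D(\bs):=\lvert\tau^L(\bs)-\tau(\bs)\rvert\in\{0,1,\ldots,M-1\}$. Write $U_a$ for the bound of Theorem~\ref{thm:cp_trunc}; although the corollary calls it an "upper bound", it is a lower bound on $\mathbb{P}(D(\bs)\le a)$, and it is used as such below. As written, the displayed inequality appears to be one unit ($4\pi$ after integration) stronger than this argument gives. I would carry out the argument below and flag the offset rather than silently change the statement.

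\textbf{Step 1 (Fubini and isotropy).} Since $D\ge 0$ is jointly measurable, Tonelli gives
\[
\mathbb{E}\lVert\tau^L-\tau\rVert_{L^1(\mathbb{S}^2)}=\int_{\mathbb{S}^2}\mathbb{E}[D(\bs)]\,d\bs .
\]
With $m_Z=0$, the fields $\mu_Z$ and $\mu_Z^L$ are isotropic and the noise $\epsilon_Z$ is iid. Hence the law of $(Z^L(\bs),Z(\bs))$, and therefore of $D(\bs)$, does not depend on $\bs$. This is also why the bound of Theorem~\ref{thm:cp_trunc} carries no $\bs$-dependence. The integral therefore equals $4\pi\,\mathbb{E}[D(\bs)]$ for any fixed $\bs$.

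\textbf{Step 2 (tail-sum identity).} For a random variable with values in $\{0,\ldots,M-1\}$, interchanging finite sums gives
\[
\sum_{a=0}^{M-1}\mathbb{P}(D\le a)=\sum_{d=0}^{M-1}(M-d)\,\mathbb{P}(D=d)=M-\mathbb{E}D,
\]
so $\mathbb{E}D=M-\sum_{a=0}^{M-1}\mathbb{P}(D\le a)$.

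\textbf{Step 3 (insert the theorem).} Substituting $\mathbb{P}(D\le a)\ge U_a$ for $a=0,\ldots,M-1$ gives
\[
\mathbb{E}D\le M-\sum_{a=0}^{M-1}U_a .
\]
The term $a=M-1$ is explicit: with the conventions $\gamma_0=-\infty$ and $\gamma_M=+\infty$ implicit in $\Delta_{k,a}$, we have $\Delta_{k,M-1}\equiv+\infty$. Each $\Phi$-term then equals $1$, and the bin probabilities telescope to $1$, so $U_{M-1}=1$. The bound becomes
\[
\mathbb{E}D\le M-1-\sum_{a=0}^{M-2}U_a .
\]

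\textbf{Main obstacle: matching the displayed constant.} The displayed form $M-1-\sum_{a=0}^{M-1}U_a$ equals $M-2-\sum_{a=0}^{M-2}U_a$ once $U_{M-1}=1$ is used. This is one unit stronger than Step 3 delivers, and it cannot hold in general: as $L\to\infty$, every $U_a\to 1$ by Proposition~\ref{prop:conv}, so the right-hand side tends to $-4\pi<0$, whereas the left-hand side is nonnegative.

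My first attempt at closing the gap would be to check whether some other reading of $U_a$ (for example, sharper bin conventions in $\Delta_{k,a}$) yields a smaller value. Every such attempt runs into the exact identity of Step 2, so no reading recovers the extra unit. I would therefore record that the displayed inequality holds with the summation range $a=0,\ldots,M-2$ (equivalently, with leading constant $M$ in place of $M-1$), namely
\[
\mathbb{E}\lVert\tau^L-\tau\rVert_{L^1(\mathbb{S}^2)}\le 4\pi\Big(M-1-\sum_{a=0}^{M-2}U_a\Big)=4\pi\Big(M-\sum_{a=0}^{M-1}U_a\Big).
\]
I would present this as the precise content of the corollary as worded, rather than as a replacement statement.
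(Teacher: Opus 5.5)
Your argument follows the paper's own proof: reduce to $4\pi\,\mathbb{E}[D(\bs)]$ using that the law of $D(\bs)$ does not depend on $\bs$, apply the tail-sum identity, then insert the lower bounds $U_a$. Your off-by-one diagnosis is correct. The third line of the paper's proof asserts $\sum_{a=0}^{M-1} a\,\mathbb{P}(D=a)=(M-1)-\sum_{a=0}^{M-1}\mathbb{P}(D\le a)$, which already fails for $D\equiv 0$ (it gives $0=-1$). So what the argument actually establishes is your bound $4\pi\bigl(M-\sum_{a=0}^{M-1}U_a\bigr)$, and the displayed inequality is false for large $L$ for exactly the reason you give.

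The only change I would make is in Step 1. Work with the quadrature sum $\sum_{i=1}^N w(\bs_i)\lvert\tau^L(\bs_i)-\tau(\bs_i)\rvert$, which is how the paper defines the norm, rather than invoking Tonelli over $\mathbb{S}^2$. With $\epsilon_Z(\bs)$ independent at uncountably many points, $D$ is not jointly measurable, so the continuum Tonelli step is not justified. The finite sum needs only linearity of expectation and $\sum_i w(\bs_i)=4\pi$.
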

\begin{proof}
    \begin{align*}
        \mathbb{E}[\lVert \tau^L - \tau \rVert_{L^1(\mathbb{S}^2)}] &= \mathbb{E}\left(\sum_{i=1}^N w(\bs_i)\lvert\tau^L(\bs_i)-\tau(\bs_i)\rvert\right)\\
        &= \sum_{i=1}^N w(\bs_i) \sum_{a=0}^{M-1}a\mathbb{P}(\lvert \tau^L(\bs)-\tau(\bs)\rvert = a )\\
        &= 4\pi \left((M-1) - \sum_{a=0}^{M-1}\mathbb{P}(\lvert \tau^L(\bs)-\tau(\bs)\rvert \leq a)\right)\\
        & \leq 4\pi \left((M-1) - \sum_{a=0}^{M-1}U_a\right).
    \end{align*}
\end{proof}
Since the distribution of $Z^L(\bs)$ is known, the bounds can be obtained numerically. In Table~\ref{tab:trunc}, we provide the worst-case expected MAE (defined as $\tfrac{\mathbb{E}[\lVert \tau^L - \tau \rVert_{L^1(\mathbb{S}^2)}]}{4\pi}$) for $180 \times 360$ grid under three different scenarios: (i) $\gamma_k = \sqrt{v_Z+1}\Phi^{-1}(\pi_k),$ $k=1,\ldots,M-1,$ where $\pi_k$'s were chosen based on the posterior distribution of the changepoints in a real world dataset from Section~\ref{sec:data}, (ii) $\gamma_k = \sqrt{v_Z+1}\Phi^{-1}(\tfrac{k}{M})$ (equal-probability), and (iii) $\gamma_k = -B + k(\tfrac{2B}{M}),$ where the endpoints $\pm B$ were chosen by simulating $Z^L(\bs)$ 10,000 times and averaging the spread (equal-distance).

%\clearpage % flush earlier floats and start fresh
\begin{table}[p]
\centering
\caption{Worst-case expected MAE}
\label{tab:trunc}

\begin{singlespace}
\scriptsize
\setlength{\tabcolsep}{3pt}
\renewcommand{\arraystretch}{1.0}

\begin{minipage}[t]{0.32\textwidth}
\centering
\caption*{(i) real data}
\begin{tabular}{cccccc}
\toprule
$M$ & \begin{tabular}{c}Observed\\Categories\end{tabular} & $v_Z$ & $\kappa$ & $\nu$ & MAE \\
\midrule
60 & 8 & 1  & 5 & 1 & 0.2221 \\
60 & 8 & 1  & 5 & 2 & 0.0122 \\
60 & 8 & 1  & 5 & 3 & 0.0000 \\
60 & 8 & 5  & 5 & 1 & 0.2605 \\
60 & 8 & 5  & 5 & 2 & 0.0143 \\
60 & 8 & 5  & 5 & 3 & 0.0000 \\
60 & 8 & 10 & 5 & 1 & 0.2655 \\
60 & 8 & 10 & 5 & 2 & 0.0146 \\
60 & 8 & 10 & 5 & 3 & 0.0002 \\
\bottomrule
\end{tabular}
\end{minipage}\hfill
\begin{minipage}[t]{0.32\textwidth}
\centering
\caption*{(i) equal-probability}
\begin{tabular}{cccccc}
\toprule
$M$ & \begin{tabular}{c}Observed\\Categories\end{tabular} & $v_Z$ & $\kappa$ & $\nu$ & MAE \\
\midrule
10  & 10  & 1  & 5 & 1 & 0.2913 \\
10  & 10  & 1  & 5 & 2 & 0.0160 \\
10  & 10  & 1  & 5 & 3 & 0.0001 \\
10  & 10  & 5  & 5 & 1 & 0.4014 \\
10  & 10  & 5  & 5 & 2 & 0.0221 \\
10  & 10  & 5  & 5 & 3 & 0.0005 \\
10  & 10  & 10 & 5 & 1 & 0.4258 \\
10  & 10  & 10 & 5 & 2 & 0.0234 \\
10  & 10  & 10 & 5 & 3 & 0.0009 \\
\addlinespace
50  & 50  & 1  & 5 & 1 & 1.8990 \\
50  & 50  & 1  & 5 & 2 & 0.1171 \\
50  & 50  & 1  & 5 & 3 & 0.0032 \\
50  & 50  & 5  & 5 & 1 & 2.4789 \\
50  & 50  & 5  & 5 & 2 & 0.1608 \\
50  & 50  & 5  & 5 & 3 & 0.0048 \\
50  & 50  & 10 & 5 & 1 & 2.6015 \\
50  & 50  & 10 & 5 & 2 & 0.1705 \\
50  & 50  & 10 & 5 & 3 & 0.0071 \\
\addlinespace
100 & 100 & 1  & 5 & 1 & 3.8651 \\
100 & 100 & 1  & 5 & 2 & 0.2644 \\
100 & 100 & 1  & 5 & 3 & 0.0079 \\
100 & 100 & 5  & 5 & 1 & 5.0110 \\
100 & 100 & 5  & 5 & 2 & 0.3634 \\
100 & 100 & 5  & 5 & 3 & 0.0119 \\
100 & 100 & 10 & 5 & 1 & 5.2532 \\
100 & 100 & 10 & 5 & 2 & 0.3853 \\
100 & 100 & 10 & 5 & 3 & 0.0165 \\
\bottomrule
\end{tabular}
\end{minipage}\hfill
\begin{minipage}[t]{0.32\textwidth}
\centering
\caption*{(iii) equal-distance}
\begin{tabular}{cccccc}
\toprule
$M$ & \begin{tabular}{c}Observed\\Categories\end{tabular} & $v_Z$ & $\kappa$ & $\nu$ & MAE \\
\midrule
10  & 10  & 1  & 3   & 1 & 0.0800 \\
10  & 10  & 1  & 5   & 1 & 0.1248 \\
10  & 10  & 1  & 100 & 1 & 1.2797 \\
10  & 10  & 5  & 3   & 1 & 0.1343 \\
10  & 10  & 5  & 5   & 1 & 0.1847 \\
10  & 10  & 5  & 100 & 1 & 1.4902 \\
10  & 10  & 10 & 3   & 1 & 0.1595 \\
10  & 10  & 10 & 5   & 1 & 0.2082 \\
10  & 10  & 10 & 100 & 1 & 1.5306 \\
\addlinespace
50  & 50  & 1  & 3   & 1 & 0.5987 \\
50  & 50  & 1  & 5   & 1 & 0.9308 \\
50  & 50  & 1  & 100 & 1 & 6.5040 \\
50  & 50  & 5  & 3   & 1 & 0.9853 \\
50  & 50  & 5  & 5   & 1 & 1.3303 \\
50  & 50  & 5  & 100 & 1 & 7.5096 \\
50  & 50  & 10 & 3   & 1 & 1.1518 \\
50  & 50  & 10 & 5   & 1 & 1.4732 \\
50  & 50  & 10 & 100 & 1 & 7.7033 \\
\addlinespace
100 & 100 & 1  & 3   & 1 & 1.3412 \\
100 & 100 & 1  & 5   & 1 & 1.9890 \\
100 & 100 & 1  & 100 & 1 & 13.0142 \\
100 & 100 & 5  & 3   & 1 & 2.0956 \\
100 & 100 & 5  & 5   & 1 & 2.7433 \\
100 & 100 & 5  & 100 & 1 & 15.0228 \\
100 & 100 & 10 & 3   & 1 & 2.4115 \\
100 & 100 & 10 & 5   & 1 & 3.0177 \\
100 & 100 & 10 & 100 & 1 & 15.4097 \\
\bottomrule
\end{tabular}
\end{minipage}

\end{singlespace}
\end{table}
\clearpage

\section{Space-time separability}\label{app:separability}
% Separability in space and time is a widely used and convenient simplification for spatiotemporal covariance models. However, it can degrade statistical performance \citep{li2007nonparametric}. In what follows, we examine the covariance structure of the model developed in Section~\ref{sec:st}, characterize its affine relationship with the Matérn class, and investigate the impact of space–time separability on changepoint analysis.
% While assuming separability in space and time is a common and convenient simplification in spatiotemporal covariance modeling, it can come at the cost of reduced statistical performance \citep{li2007nonparametric}. To better understand these trade-offs, 
We analyze the covariance structure of the model introduced in Section~\ref{sec:st}, explore its affine relationship with the Matérn class, and investigate how the assumption of separability affects changepoint analysis.

The cross covariance of $U(s,t)$ is given by
\begin{align*}
    \text{Cov}(U(s,t),U(s',t')) &= \text{Cov}(\cU(t)(s),\cU(t')(s'))\\
    &= \text{Cov}\left(\sum_{l=0}^\infty\sum_{m=-l}^l \hat{\cU}_{lm}(t)\psi_{lm}(s),\sum_{l=0}^\infty\sum_{m=-l}^l \hat{\cU}_{lm}(t')\psi_{lm}(s')\right)\\
    &= \sum_{l,l'}\sum_{m,m'} \psi_{lm}(s)\psi_{l'm'}(s') \text{Cov}(\hat{\cU}_{lm}(t),\hat{\cU}_{l',m'}(t'))\\
    &= \sum_{l}\sum_{m} \psi_{lm}(s)\psi_{lm}(s') \text{Cov}(\hat{\cU}_{lm}(t),\hat{\cU}_{l,m}(t')).
\end{align*}
% where the last equality follows from the independence of $\hat{\cU}_{lm}(t),\hat{\cU}_{l',m'}(t)$ for $l\neq l'$ or $m\neq m'.$
Let $u$ denote the angle between $\bs$ and $\bs'$ and assume $t' = t+h,$ $h>0.$ By It\^o's isometry and spherical harmonics addition theorem, the above expression becomes 
\begin{align}\label{eq:crosscov}
    \cov(U(s,t),U(s',t')) %&= \sigma^2_Q \sum_{l=0}^\infty \sum_{m=-l}^l \psi_{lm}(s)\psi_{lm}(s') \frac{(\kappa^2 + l(l+1))^{-(\nu+1)}}{\xi_r + \xi_d l(l+1)}e^{-(\xi_r + \xi_d l(l+1))h} \\
    &= \frac{\sigma^2_Q}{4\pi} \sum_{l=0}^\infty (2l+1)P_l(\cos (u))\frac{(\kappa^2 + l(l+1))^{-(\nu+1)}}{\xi_r + \xi_d l(l+1)}e^{-(\xi_r + \xi_d l(l+1))h}. 
\end{align}

Denote by $C_{s,t}(u,h) := \cov(U(s,t),U(s',t'))$ the space-time cross covariance, $C_s(u) := \cov(U(\bs,t),U(\bs',t))$ the marginal spatial covariance, and $C_t(h) := \cov(U(\bs,t),U(\bs,t'))$ the marginal temporal covariance. Two special cases of  $C_{s,t}(u,h)$ merit particular attention, as they highlight interesting properties of the covariance structure. First, when $\xi_r/\xi_d = \kappa^2,$ its marginal spatial covariance reduces to 
\begin{align*}
    C_s(u) &\propto \sum_{l=0}^\infty \sum_{m=-l}^l(\kappa^2 + l(l+1))^{-(\nu+2)} \psi_{lm}(s)\psi_{lm}(s'),
\end{align*}
which coincides with the covariance of the spatial process defined by the Whittle-Mat\'ern SPDE \eqref{eq:SPDE} with inverse range parameter $\kappa$ and smoothness $\nu + 1.$ This can be seen as the reparametrization of the diffusion model in \cite{lindgren2020diffusion}. Second, when $\xi_d = 0$, the temporal decay rate is independent of $l$ and the cross covariance in \eqref{eq:crosscov} factors as
\begin{align}\label{eq:sepcov}
     C_{s,t}(u,h) 
    &= \frac{\sigma^2_Q}{\xi_r} e^{-\xi_rh}\sum_{l=0}^\infty \sum_{m=-l}^l(\kappa^2 + l(l+1))^{-(\nu+1)} \psi_{lm}(s)\psi_{lm}(s')\\
    &\propto C_t(h)C_s(u), \nonumber
\end{align}
implying a space-time separable structure. In this case, the marginal spatial covariance $C_s(u)$ coincides again with that of the spatial process defined by the Whittle–Matérn SPDE \eqref{eq:SPDE} with inverse range $\kappa$ and smoothness $\nu$, and the marginal temporal covariance $C_t(h) \propto (e^{-\xi_r})^h$ corresponds to an AR(1) process with autoregressive coefficient $e^{-\xi_r}$. This suggests that the diffusivity parameter $\xi_d$ governs the interaction between spatial and temporal dependencies of $U$, motivating the study of space-time separability.

To quantify deviations from separability for general $\xi_d$, we first define the correlation functions as $\rho_{st}(\cdot) = C_{s,t}(\cdot) / C_{s,t}(\mathbf{0}, 0)$, $\rho_s(\cdot) = C_s(\cdot)/C_s(\mathbf{0})$, and $\rho_t(\cdot) = C_t(\cdot)/C_t(0)$. We then define the supremum norm
\[c_{sep} := \max_{u,h}|\rho_{st}(u,h) - \rho_s(u)\rho_t(h) |\]
as a measure of non-separability. Note that $c_{sep} = 0$ if and only if the covariance is separable, and a larger value of $c_{sep}$ indicates higher degree of non-separability.

Let $f_l(\xi_r,\xi_d) := (2l+1)\frac{(\kappa^2+l(l+1))^{-(\nu+1)}}{\xi_r + \xi_d l(l+1)}$.  A direct calculation shows that   
\begin{align*}
    c_{sep}(\xi_r,\xi_d) &= \max_{u,h} \left \lvert e^{-\xi_rh} \sum_{l\neq l'}f_l(\xi_r,\xi_d)f_{l'}(\xi_r,\xi_d)P_l\left(\cos (u)\right)\left(e^{-\xi_dl'(l'+1)h} - e^{-\xi_dl(l+1)h}\right) \right \rvert.
\end{align*}
It can be seen that $c_{sep} = 0$ when $\xi_d = 0$ and $c_{sep} \to 0$ as $\xi_d \to \infty$, which corresponds to the case of vanishing marginal temporal correlation. Figure \ref{fig:csep} plots the $c_{sep}$ as a function of $\xi_d$, with each panel corresponding to different values of $\kappa$ and $\nu.$ Within each panel, the curves for varing values of $\xi_r$ are distinguished by color. The peak of each curve represents the parameter combination that yields the "maximally non-separable" covariance, with each peak associated with different values of marginal temporal and spatial correlation. 

\begin{figure}[!ht]
\centering
\includegraphics[width=0.8\textwidth]{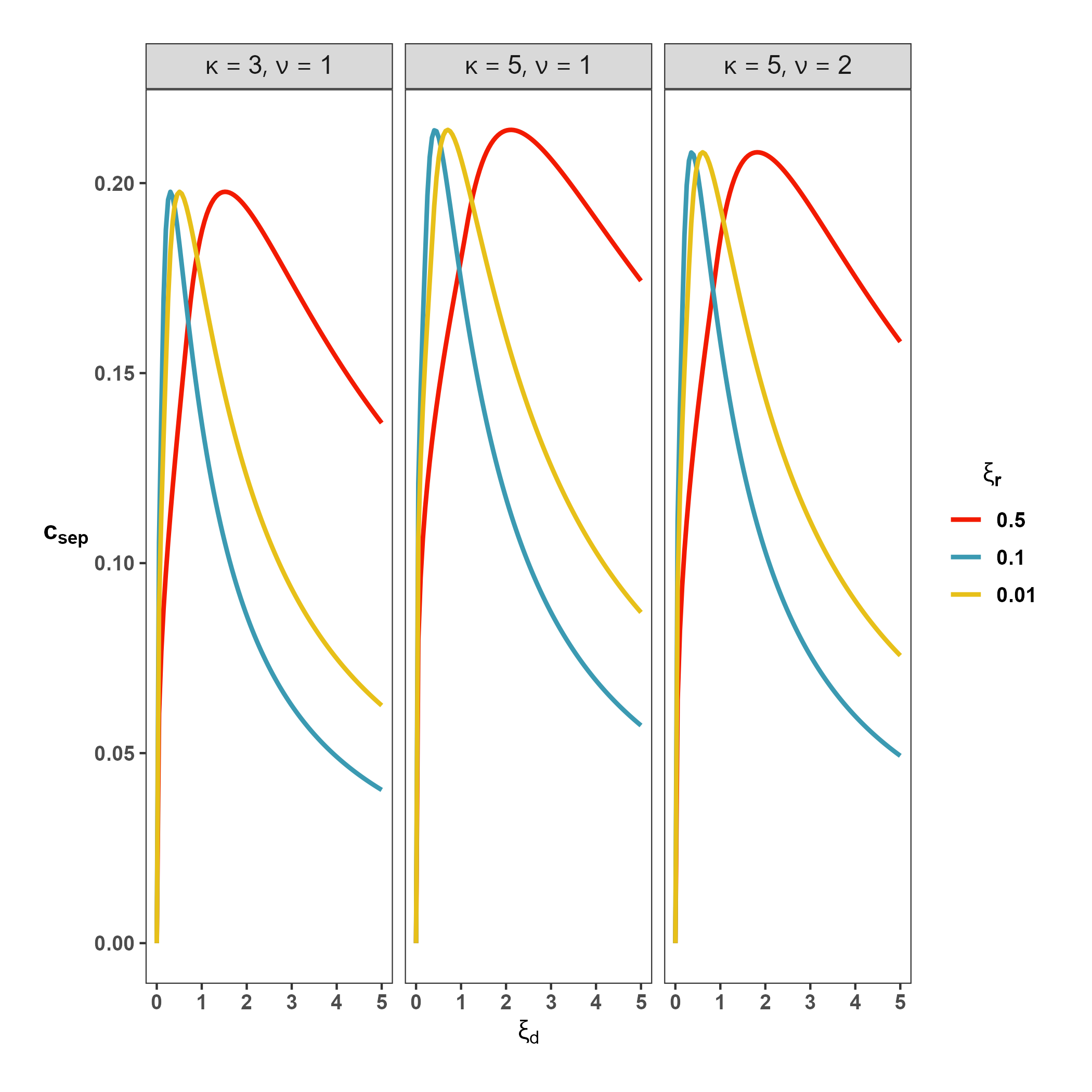}
\caption{\label{fig:csep} Plot of $c_{sep}$ in the y-axis vs. $\xi_d$ in the x-axis for different values of $\kappa,\nu.$ Each curve represent different values of $\xi_r,$ distinguished by color. } 
\end{figure}

\section{Sparsity of the spherical harmonic error operator}\label{app:sh_proof}
\subsection{Spherical Harmonics}
\begin{defn} The real spherical harmonics are given by
    \begin{align*}
        \psi_{lm}(\theta,\phi) &= \begin{cases} \bar{P}_{lm}(\cos \theta) \cos (m\phi), & m \geq 0\\ \bar{P}_{l|m|}(\cos \theta) \sin(|m|\phi), &m < 0, \end{cases}\\
    \intertext{where $\bar{P}_{lm}$ are the orthonormalized associated Legendre polynomials}
    \bar{P}_{lm}(\mu) &= \sqrt{\frac{(2-\delta_{0m})(2l+1)}{4\pi} \frac{(l-m)!}{(l+m)!}} (1-\mu^2)^{m/2}\frac{d^m}{d\mu^m}P_l(\mu)
    \end{align*}
and $P_l$ are the standard Legendre polynomials.
\end{defn}
For fixed value of $m,$ $\bar{P}_{lm}$ are orthogonal:
\begin{align*}
    \int_{-1}^1 \bar{P}_{lm}(\mu)\bar{P}_{l'm}(\mu)d\mu = \frac{(2-\delta_{0m})}{2\pi}\delta_{ll'},
\end{align*}
and the spherical harmonics $\psi_{lm}$ are orthonormal over both $l$ and $m$:
\begin{align*}
    \int_{\bs \in \mathbb{S}^2} \psi_{lm}(\bs)\psi_{l'm'}(\bs) d\bs &= \delta_{mm'}\delta_{ll'}.
\end{align*}

\subsection{Proof of Theorem~\ref{thm:sparse}}

\begin{proof}[Proof of Theorem~\ref{thm:sparse}]
Let $A(m,\phi) = \begin{cases} \cos (m\phi) &m\ge 0, \\ \sin(|m|\phi) & m <0 \end{cases}.$ The entries of $\boldsymbol{\Psi}' \mathbf{D}^2_w\boldsymbol{\Psi}$ are given by 
\begin{align}\label{eq:sparse_entries}
& \sum_{j=0}^{2K-1}\sum_{i=0}^{K-1} \psi_{lm}(\theta_i,\phi_j)\psi_{l'm'}(\theta_i,\phi_j)w(\theta_i)^2  \nonumber \\ 
= & \sum_{j=0}^{2K-1} A(m,\tfrac{\pi j}{K})A(m',\tfrac{\pi j}{K}) \sum_{j=0}^{K-1}\bar{P}_{lm}(\cos \tfrac{\pi i}{K})\bar{P}_{l'm'}(\cos \tfrac{\pi i}{K})w(\tfrac{\pi i}{K})^2 \nonumber \\
= & \begin{dcases} \frac{2K}{2-\delta_{0m}} \sum_{i=0}^{K-1}\bar{P}_{lm}(\cos \tfrac{\pi i}{K})\bar{P}_{l'm'}(\cos \tfrac{\pi i}{K})w(\tfrac{\pi i}{K})^2 &\quad \text{ if }m = m'\\
0 & \phantom{ if } \quad m\neq m',
\end{dcases}
\end{align}
where the Driscoll-Healy weights are given by \citep{driscoll1994computing}
\[w(\theta) = \frac{2\sqrt{2}}{K}\sin \theta \sum_{j=0}^{\nicefrac{K}{2}-1} \frac{1}{2j+1}\sin ((2j+1)\theta).\]
Consider the case where $m=m'$. We have
\begin{align*}
    & \sum_{i=0}^K\bar{P}_{lm}(\cos \tfrac{\pi i}{K})\bar{P}_{l'm'}(\cos \tfrac{\pi i}{K})w(\tfrac{\pi i}{K})^2\\
    = & \frac{1}{2}\sum_{i=-K}^K\bar{P}_{lm}(\cos \tfrac{\pi i}{K})\bar{P}_{l'm'}(\cos \tfrac{\pi i}{K})w(\tfrac{\pi i}{K})^2 d\theta\\
    = & \frac{K}{2\pi} \int_{-\pi}^\pi \bar{P}_{lm}(\cos \theta)\bar{P}_{l'm'}(\cos \theta)\sin^2\theta w(\theta)^2 d\theta \\
    = & \frac{\pi}{4K} \int_{-\pi}^\pi \bar{P}_{lm}(\cos \theta)\bar{P}_{l'm'}(\cos \theta)\sin^2\theta \left(\frac{4}{\pi}\sum_{j=0}^{\nicefrac{K}{2}-1}\frac{\sin((2j+1)\theta)}{2j+1} \right)^2 d\theta\\
    = & \frac{\pi}{2K} \int_0^\pi \bar{P}_{lm}(\cos \theta)\bar{P}_{l'm'}(\cos \theta)\sin^2\theta \left(\frac{4}{\pi}\sum_{j=0}^{\infty}\frac{\sin((2j+1)\theta)}{2j+1} \right)^2 d\theta\\
    = & \frac{\pi}{2K} \int_0^\pi \bar{P}_{lm}(\cos \theta)\bar{P}_{l'm'}(\cos \theta)\sin^2\theta d\theta.
\end{align*}
Using the recurrence formula
\[\sqrt{1-x^2} P_{lm}(x) = \frac{-1}{2l+1}(P_{l+1, m+1}(x) - P_{l-1,m-1}(x)), \]
the integral can be expressed as
\begin{align*}
    \int_0^\pi \bar{P}_{lm}(\cos \theta)\bar{P}_{l'm'}(\cos \theta)\sin^2\theta d\theta &= \tfrac{1}{2l'+1}\int_{-1}^1 \bar{P}_{lm}(x)\bar{P}_{l'-1,m+1}(x)dx - \int_{-1}^1 \bar{P}_{lm}(x)\bar{P}_{l'+1,m+1}(x)dx
\end{align*}
The overlap integral of two associated Legendre polynomials is given by (\cite{mavromatis1999single}, \cite{dong2002overlap})
\begin{align*}
    \int_{-1}^1 P_{l_1m_1}(x)P_{l_2m_2}(x)dx &= A(l_1,m_1,l_2,m_2)\sum_k B(|m_2-m_1|,k)(2k+1)\begingroup \renewcommand{\arraystretch}{0.5}
\begin{pmatrix}
l_1 & l_2 & k \\
0 & 0 & 0
\end{pmatrix}
\endgroup\\
    &\hspace{3.5cm} \times \begingroup
\renewcommand{\arraystretch}{0.5}
\begin{pmatrix}
l_1 & l_2 & k \\
-m_1 & m_2 & m_1-m_2
\end{pmatrix}
\endgroup,
\end{align*}
where 
\begin{align*}
    A(l_1,m_1,l_2,m_2) &= \tfrac{(-1)^{m_1}}{4\pi}|m_2-m_1|2^{|m_2-m_1|-2}\sqrt{(2-\delta_{0m_1})(2-\delta_{0m_2})(2l_1+1)(2l_2+1)},\\
    B(|m_2-m_1|,k) &= (1+(-1)^{k-|m_2-m_1|})\sqrt{\tfrac{(k-|m_2-m_1|)!}{(k+|m_2-m_1|)!}}\tfrac{\Gamma(\tfrac{k}{2})\Gamma(\tfrac{k+|m_2-m_1|+1}{2})}{(\tfrac{k-|m_2-m_1|}{2})! \Gamma(\tfrac{k+3}{2})},
\end{align*}
$\begingroup
\renewcommand{\arraystretch}{0.5}
\begin{pmatrix}
l_1 & l_2 & k \\
0 & 0 & 0
\end{pmatrix}
\endgroup$ and $\begingroup
\renewcommand{\arraystretch}{0.5}
\begin{pmatrix}
l_1 & l_2 & k \\
-m_1 & m_2 & m_1-m_2
\end{pmatrix}
\endgroup$ 
are 3-j symbols, and $k$ satisfies $\lvert l_1-l_2\rvert \leq k \leq l_1+l_2,$ $k \geq \lvert m_1-m_2\rvert$, $k+l_1+l_2$ is even, and $k + m_2-m_1$ is even. Setting $(l_1,m_1,l_2,m_2) = (l,m,l'\pm 1,m+1),$ the last two conditions are satisfied only when $k$ is odd and $l'+l$ is even. Thus, \eqref{eq:sparse_entries} is non-zero only when $m=m'$ and $l'+l$ is even.
\end{proof}

\end{document}